\documentclass[11pt]{article}
\usepackage{amsmath}
\usepackage{amsthm}
\usepackage{amssymb}
\usepackage{url}
\usepackage[final]{showkeys}
\usepackage[usenames,dvipsnames]{xcolor}
\usepackage{graphicx}
\usepackage{bm}

\usepackage[margin=1in]{geometry}

\usepackage{titlesec}

\titlespacing*{\section}{0pt}{11pt plus 2pt minus 2pt}{8pt plus 1pt minus 1pt}
\titlespacing*{\subsection}{0pt}{9pt plus 2pt minus 2pt}{6pt plus 1pt minus 1pt}

\usepackage[vlined]{algorithm2e}

\usepackage{fullpage}

\usepackage{stmaryrd}

\usepackage[colorlinks=true, allcolors=blue]{hyperref}

\usepackage{latexsym}

\newtheorem{theorem}{Theorem}[section]
\newtheorem{lemma}[theorem]{Lemma}

\newtheorem{corollary}[theorem]{Corollary}
\newtheorem{definition}[theorem]{Definition}

\newcommand{\AppendixBanner}{%
  \hbox to \textwidth{%
    \leaders\hbox{\rule[.55ex]{1em}{0.5pt}}\hfill
    \enspace\textbf{APPENDIX}\enspace
    \leaders\hbox{\rule[.55ex]{1em}{0.5pt}}\hfill
  }%
}

\makeatletter
\def\ps@appendix{%
  \let\@mkboth\markboth
  \def\@oddhead{\AppendixBanner}%
  \def\@evenhead{\AppendixBanner}%
  \def\@oddfoot{\hfil\thepage\hfil}%
  \def\@evenfoot{\hfil\thepage\hfil}%
}
\makeatother

\usepackage[T1]{fontenc}    
\usepackage[utf8]{inputenc} 
\usepackage{textcomp}       

\DeclareUnicodeCharacter{0301}{\'} 

\usepackage[
    style=alphabetic, 
    backend=biber, 
    giveninits=false, 
    maxbibnames=99, 
    maxcitenames=3, 
    maxalphanames=4,  
    minalphanames=3,
    doi=true, 
    url=true
]{biblatex}
\DeclareSourcemap{
  \maps[datatype=bibtex]{
    \map{
      \step[fieldsource=doi, final]
      \step[fieldset=url, null]
    }
  }
}

\renewbibmacro{in:}{%
  \ifentrytype{article}{}{\printtext{\bibstring{in}\intitlepunct}}%
}

\DeclareFieldFormat[article,incollection,inproceedings]{title}{\textit{#1}}

\newcommand{\sm}{\setminus}

\newcommand{\EQ}{\;=\;}
\newcommand{\LE}{\;\le\;}
\newcommand{\LT}{\;<\;}
\newcommand{\GE}{\;\ge\;}
\newcommand{\GT}{\;>\;}

\newcommand{\ba}{\bm{a}}
\newcommand{\bb}{\bm{b}}
\newcommand{\bc}{\bm{c}}
\newcommand{\bs}{\bm{s}}

\newcommand{\bu}{\bm{u}}
\newcommand{\bv}{\bm{v}}

\newcommand{\RR}{\mathbb{R}}

\newcommand{\vH}{\vec{H}}
\newcommand{\sink}{sink}
\newcommand{\pra}{\prod_{i=1}^n A_i}
\newcommand{\praa}{\prod_{i=1}^{n_0} A_i}

\newcommand{\algcall}[1]{\text{\normalfont\textsc{#1}}}
\newcommand{\RAR}{\emph{Random-Action-Removal}}
\newcommand{\RARalg}{\algcall{Rand-Act-Rem}}
\newcommand{\cH}{\vec{\mathcal{H}}_{\text{AUSO}}(\pra)}
\newcommand{\oH}{\vec{\mathcal{H}}(\pra)}

\newcommand{\s}{\ell}

\AtBeginDocument{
  \setlength{\abovedisplayskip}{7pt plus 2pt minus 2pt}
  \setlength{\belowdisplayskip}{7pt plus 2pt minus 2pt}
  \setlength{\abovedisplayshortskip}{2pt plus 2pt minus 2pt}
  \setlength{\belowdisplayshortskip}{4pt plus 2pt minus 2pt}
}

\title{\vspace*{-0.5cm}Improved subexponential analysis of the \RAR\ algorithm for 2-player turn-based games and non-binary AUSOs}

\author{Uri Zwick\thanks{Blavatnik School of
Computer Science, Tel Aviv University,
  Israel. E-mail: {\tt
    zwick@tau.ac.il}. Work supported by ISF grant no.\ 2854/20.}}
\date{}

\begin{document}

\maketitle

\vspace*{-1cm}
\begin{abstract}%
\setlength{\parindent}{0pt}%
\setlength{\parskip}{3pt plus 2pt}%
    \noindent
    We give a concise description and an improved analysis of the \RAR\ algorithm for solving 2-player, 0-sum, turn-based, possibly infinite duration, stochastic or non-stochastic games played on graphs, or on finite sets of states. More generally, the algorithm can be used to find the \emph{sink} of an Acyclic Unique Sink Orientation (AUSO) of a non-binary hypercube. The families of games that can be solved by the algorithm include discounted and non-discounted stochastic games (SGs) and Mean Payoff Games (MPGs). The obtained algorithm is the fastest known randomized algorithm for solving such games, slightly improving on a much more complicated algorithm of Hansen and Zwick (STOC 2015).

    The \RAR\ algorithm is an adaptation of the \emph{Random-Facet} algorithm used to solve linear programming (LP) problems, or, more generally, LP-type problems. Two dual variants of the Random-Facet algorithm were developed independently by Kalai (STOC 1992) and by Matou{\v{s}}ek, Sharir and Welzl (SoCG 1992). For LP problems, the algorithm of Kalai is a primal \emph{simplex} algorithm, while the algorithm of Matou{\v{s}}ek, Sharir and Welzl is a dual \emph{simplex} algorithm. The \RAR\ algorithm for games or AUSOs is an adaptation of the dual algorithm of Matou{\v{s}}ek, Sharir and Welzl, and is a randomized \emph{strategy iteration} algorithm.

    We present an improved analysis of the \RAR\ algorithm that shows that it solves games with~$n$ states and $m\ge 2n$ actions, belonging to one of the players, in $e^{O(\sqrt{n\ln(m/n)}\,)}$ 
    time. This improves on a previous $e^{O(\sqrt{n\ln(m/\sqrt n)}\,)}$ bound that follows from the analysis of Matou{\v{s}}ek, Sharir and Welzl (SoCG 1992). Our improved analysis relies of the \emph{hypercube} structure of games and AUSOs. It does not improve on the bound of Matou{\v{s}}ek, Sharir and Welzl for general LP-type problems that may lack such structure. The improved bound is obtained by deriving a tighter recurrence relation implied by the hypercube structure, and by using interesting mathematical techniques, such as the discrete WKB method, for solving the resulting non-standard recurrence relation. 

    An $e^{O(\sqrt{n\ln(m/n)}\,)}$ bound for games and AUSOs also follows from an improved algorithm for solving LP and LP-type problems of Hansen and Zwick (STOC 2015). However, this bound is obtained using a much more complicated algorithm, and has worst constant factors in the exponent. The main point of this paper is that the $e^{O(\sqrt{n\ln(m/n)}\,)}$ bound holds for the simple and natural \RAR\ algorithm without any modification.

    The improvement from $e^{O(\sqrt{n\ln(m/\sqrt n)}\,)}$ to $e^{O(\sqrt{n\ln(m/n)}\,)}$ is especially significant when $m=O(n)$, which is a very natural setting, as then the bound on the running time of the algorithm drops from $e^{O(\sqrt{n\ln n})}$ to $e^{O(\sqrt{n})}$.
    
\end{abstract}

\section{Introduction}

The simplex algorithm is one of the most important algorithms used to solve linear programming (LP) problems, both in theory and in practice. However, no variant of the simplex algorithm is known to run in polynomial time in the worst case. Furthermore, most natural variants of the simplex algorithm are known to require exponential time on some carefully designed instances. 

Kalai \cite{Kalai92,Kalai92b} and Matou{\v{s}}ek, Sharir and Welzl \cite{MaShWe96}, in breakthrough results, obtained randomized variants of the simplex algorithm with \emph{subexponential} running times. Their algorithms are actually duals of each other. The algorithm of Kalai \cite{Kalai92,Kalai92b}, usually called the \emph{Random-Facet} algorithm, is a \emph{primal} simplex algorithm, while the algorithm of Matou{\v{s}}ek, Sharir and Welzl \cite{MaShWe96} is a \emph{dual} simplex algorithm. 

Matou{\v{s}}ek, Sharir and Welzl \cite{MaShWe96} point out that their algorithm works not only for LPs but also for a wider class of LP-type problems. The first connection between their algorithm and the games considered in this paper was made by Ludwig \cite{Ludwig95} who obtained an $e^{O(\sqrt{n})}$-time algorithm for solving \emph{binary} games, i.e., games in which one of the players controls $n$ states and has exactly two actions from each state. {Bj{\"o}rklund} and Vorobyov \cite{BjVo05,BjVo07} extended the algorithm to general non-binary games and used the analysis of Matou{\v{s}}ek, Sharir and Welzl \cite{MaShWe96} to obtain a bound of $e^{O(\sqrt{n\ln(m/\sqrt n)}\,)}$ on its running time, where~$m$ is the total number of actions.

The algorithm of {Bj{\"o}rklund} and Vorobyov \cite{BjVo05,BjVo07}, which for reasons that will become clear shortly we call the \RAR\ algorithm, is simply a recasting of the algorithm of Matou{\v{s}}ek, Sharir and Welzl \cite{MaShWe96} in the context of games or AUSOs. Halman \cite{Halman07} showed that all games considered are LP-type problems, which explains why such a recasting is possible. {Bj{\"o}rklund} and Vorobyov \cite{BjVo05,BjVo07} rely completely on the probabilistic analysis of Matou{\v{s}}ek, Sharir and Welzl \cite{MaShWe96} that holds for general LP-type problems.

Games and AUSOs have more structure than general LP-type problems. Generally speaking, the number of \emph{states} of a game corresponds to the \emph{dimension} of a (basis-regular) LP-type problem. An \emph{action} of a game corresponds to a \emph{constraint}, or an \emph{inequality}, of the LP-type problem. A \emph{strategy} of a game corresponds to a \emph{basis} of the LP-type problem. In games and AUSOs, each action belongs to a specific state. Furthermore, each strategy uses exactly one action of each state. We refer to this as the \emph{hypercube} structure of games and AUSOs. General LP-type problems usually lack such structure. In particular, there is no fixed partition of the set of constraints such that each basis contains exactly one constraint from each part of the partition. We use the hypercube structure of games to obtain an improved bound on the running time of the \RAR\ algorithm. As explained, for games with $m=O(n)$ actions, the bound is improved from $2^{O(\sqrt{n\ln n})}$ to $2^{O(\sqrt{n})}$, providing a smooth transition of the complexity of the algorithm when moving from binary to non-binary games.

In a nutshell, the \RAR\ algorithm works as follows. It starts with an arbitrary strategy of one of the players. It chooses a \emph{random action} not contained in the current strategy, \emph{removes} it from the game and solves the resulting subgame recursively. If the removed action cannot be used to improve the optimal strategy of the subgame, then this strategy is also the optimal strategy of the whole game. Otherwise, the algorithm performs an improvement step using the removed strategy and performs a recursive call, on the whole game, starting with the improved strategy. The properties of the games studied imply that the \RAR\ algorithm must always terminate with an optimal strategy. The main question is how many improvement steps it performs, in expectation, before it reachs an optimal strategy. Each improvement step takes only polynomial time. As mentioned, the analysis of Matou{\v{s}}ek, Sharir and Welzl \cite{MaShWe96} shows that the expected number of iterations performed by the algorithm on any game with $n$ states and $m$ actions in total, for one of the players, is at most $e^{O(\sqrt{n\ln(m/\sqrt n)}\,)}$.

We perform an improved analysis of the \RAR\ algorithm that utilizes the hypercube structure of the games considered. (Our improved analysis does not hold for general LP-type problems that do not have a hypercube structure.) This leads to an improved bivariate recurrence relation that bounds the expected number of steps performed by the algorithm in terms of~$n$, number of states and~$m$, the total number of actions for one of the players in the input game. 
The improved recurrence relation utilizes cases in which the number of effective states of the game, and not just the total number of actions, decrease as a result of some improvement steps performed by the algorithm.

The main technical challenge is then solving the new recurrence relation. Matou{\v{s}}ek, Sharir and Welzl \cite{MaShWe96} solve their recurrence relation using generating functions techniques. There does not seem to be an easy way of using their generating function techniques to solve our modified recurrence relation. We thus use a different approach. 

We perform an asymptotic analysis of our recurrence relation using a technique similar to the discrete Wentzel–Kramers–Brillouin (WKB) method. (See, e.g., Bender and Orszag \cite{BenderOrszag1999}.) In this method we stipulate an \emph{ansatz}, an assumed general form of the solution, sometimes obtained using numerical experiments. By manipulating the ansatz we obtain a differential equation, known as the \emph{eikonal} equation, that dictates the asymptotic behavior of the solution. We then use properties of the eikonal differential equation to show that a tuned version of the ansatz is indeed an upper bound on the exact solution of the recurrence relation. 

The difficulty of applying the WKB technique in our case is that the eikonal equation does not seem to have an explicit closed-form solution. We thus need to work with the implicit solution defined by the differential equation. Nevertheless, we can obtain rigorous lower and upper bounds on the exact solution of the eikonal equation that lead to rigorous upper bounds on the solution of our bivariate recurrence relation. Some of the steps used in this process were devised with the help of Google's Gemini and OpenAI's ChatGPT. However, the resulting analysis is relatively short, completely explicit, and can be easily verified by traditional methods, without the use of a computer.

Since our improved analysis of the \RAR\ algorithm only relies on the hypercube structure of the games considered, the analysis actually applies to a wider class of abstract games, best described by the notion of \emph{Acyclic Unique Sink Orientations (AUSOs)} of (non-binary) hypercubes. Each concrete game considered gives rise to such an AUSO and finding an optimal strategy for one of the players in the game reduces to finding the \emph{sink} of this AUSO. Szab{\'o} and Welzl \cite{SzWe01} began a systematic study of binary AUSOs (and USOs, dropping the acyclicity requirement), though the concept goes back much further, e.g., to Williamson Hoke \cite{WiHo88}. G{\"a}rtner, Morris, and R{\"u}st \cite{GaMoRu08} studies non-binary USOs. G{\"a}rtner \cite{Gartner02} studies the behavior of the \RAR\ algorithm on binary AUSOs. Our analysis is a far-reaching extension of his analysis to non-binary AUSOs. The non-binary case is much more challenging than the binary case and requires the development of new techniques.

\subsection*{Paper organization}

The rest of this paper is organized as follows. In Appendix~\ref{A:context} we give additional high-level details about the games considered in this paper, about known algorithms for solving them and their relation to algorithms for solving linear programming problems, and about AUSOs. We also explain the significance of the questions considered and of the obtained results.

In Section~\ref{sec:AUSOs} we briefly define the notion of AUSOs. A more formal definition appears in Appendix~\ref{A:AUSOs}. In Section~\ref{sec:Games} we define the family of 2-player Turn-Based Stochastic Games (2-TBSG) which is the most general family of concrete games considered in this paper. It includes, in particular, the family of Simple Stochastic Games (SSGs). Mean Payoff Games (MPGs) and Discounted Payoff Games (DPGs) are deterministic versions of 2-TBSGs. We also define the family of Strategy Improvement (SI) algorithms that can be used to solve these games. The \RAR\ is a simple randomized member of this family. We also explain how the solution of a 2-TBSG can be reduced to the problem of finding the sink of an appropriate AUSO. In Section~\ref{sec:Random-Action} we formally define the \RAR\ algorithm as applied to AUSOs. 

The main contributions of this paper are contained in Sections~\ref{sec:Analysis} to~\ref{S:Recurrence}. In Section~\ref{sec:Analysis} we obtain an improved analysis of the \RAR\ algorithm when applied to non-binary AUSOs. The main result of this section is an improved bivariate recurrence relation bounding the expected number of improvement steps performed by the algorithm. The two parameters of this recurrence relation are~$n$, the number of states, and~$m$, the total number of actions. As any non-degenerate state has at least two actions associated with it, we can assume that $m\ge 2n$. It is convenient to replace the parameter~$m$ by $k=m-2n\ge 0$. In a binary game, each state has exactly two actions and $m=2n$, so $k=0$. The parameter~$k$ thus represents the \emph{excess} number of actions with respect to a binary game. In Appendix~\ref{A:equiv} we obtain an extremely simple equivalent recurrence relation. 

In Section~\ref{sec:Binary} we analyze the $k=0$ case, corresponding to binary games, very slightly extending the results of G{\"a}rtner \cite{Gartner02} who showed that the expected number of steps performed by the algorithm on binary games is at most $e^{2\sqrt{n}}$. In Section~\ref{S:Simple} we extend the $e^{2\sqrt{n}}$ bound for binary games to an $e^{2\sqrt{n+k}}=e^{2\sqrt{m-n}}$ bound for non-binary games. This result already implies an upper bound of $e^{O(\sqrt{n})}$ for games with $m=O(n)$, i.e., with a linear total number of actions. Interestingly, the bound $e^{2\sqrt{n+k}}$ seems to be asymptotically tight for $k\le n$, or equivalently for $m\le 3n$. As $k$ grows beyond~$n$, the $e^{2\sqrt{n+k}}$ upper bound becomes more and more loose. 

In Section~\ref{S:Recurrence}, the main technical section of the paper, we obtain an upper bound of $e^{2\sqrt{n\Phi(\frac{k}{n})}}$ on the expected number of improvement steps perfored by the \RAR\ algorithm, where $\Phi(\alpha)$ satisfies an explicitly derived differential equation. We also show that $\Phi(\alpha)\le 2+\ln\frac{1+\alpha}{2}+\ln(1+
\ln\frac{1+\alpha}{2})$, for every $\alpha\ge 1$. This justifies the $e^{O(\sqrt{n\ln\frac{m}{n}})}$ upper bound on the running time of the algorithm claimed in the abstract and introduction.

\section{Acyclic Unique Sink Orientations (AUSOs)}\label{sec:AUSOs}

We give here a brief definition of Acyclic Unique Sink Orientations (AUSOs) of hypercubes. More formal definitions are given in Appendix~\ref{A:AUSOs}. If $A_1,A_2,\ldots,A_n$ be disjoint finite sets, the \emph{hypercube} $H(\pra)$ is a graph whose vertex set is $\pra$ with edges connecting two vertices that differ in exactly one coordinate. If $|A_i|=2$, for $i\in[n]$, then the hypercube is said to be \emph{binary}. An \emph{orientation} $\vH$ of $H(\pra)$ is obtained by giving each edge of the hypercube a direction. A \emph{sink} of an orientation $\vH$ is a vertex $\bs\in\pra$ such that all edges adjacent to~$\bs$ are directed towards~$\bs$. An orientation $\vH$ is acyclic if it does not contain a directed cycle. If $A'_i\subseteq A_i$, for $i\in[n]$, then the \emph{subcube} $\vec{H}[\prod_{i=1}^n A'_i]$ is the oriented hypercube on $\prod_{i=1}^n A'_i$ induced by~$\vec{H}$. Finally, an orientation $\vH$ is said to be an \emph{Acyclic Unique Sink Orientation (AUSO)} if and only if it is acyclic and any subcube of~$\vH$, including $\vH$ itself, has a unique sink. We let $\vec{\mathcal{H}}_{AUSO}(\pra)$ the set of all the AUSOs defined on $\pra$.

AUSOs provide an elegant abstraction of optimization problems in which local optima are guaranteed to be global optima, and for which this also holds for any subproblem. An edge $\ba\to\bb$ indicates that $\bb$ is better than $\ba$. Noteworthy examples are linear programs whose feasibility region is a combinatorial hypercube, e.g., the Klee-Minty cubes \cite{KaKl92} used to obtain lower bounds for various pivoting rules for the simplex algorithm, or the games that we consider in the next section. For more on AUSOs, see \cite{SzWe01,Matousek94}.

A sink-finding algorithm is given \emph{oracle} access to an input AUSO $\vH$. The algorithm can query a vertex $\ba\in\vH$ and receive the direction of all the edges incident on~$\ba$. The complexity of the algorithm is the (expected) number of queries performed by the algorithm before the sink of $\vH$ is identified. It is a major open problem whether there is a sink-finding algorithm that performs a polynomial, in~$n$ and $m=\sum_{i=1}^n|A_i|$, number of queries. The fastest known sink-finding algorithm is the \RAR\ algorithm analyzed in this paper. As mentioned, we improve the bound on the complexity of this algorithm from $e^{O\bigl(\sqrt{n\ln\frac{m}{\sqrt n}}\,\bigr)}$ to the more natural bound of $e^{O\bigl(\sqrt{n\ln\frac{m}{n}}\,\bigr)}$.

\section{2-player, 0-sum, Turn-Based, Stochastic Games (2-TBSGs)}\label{sec:Games}

\newcommand{\obj}{{\text{\scshape obj}}}
\newcommand{\agg}{{\text{\scshape agg}}}
\newcommand{\avg}{{\text{\scshape avg}}}
\newcommand{\dsc}{{\text{\scshape dsc}}}
\newcommand{\VAL}{{\text{VAL}}}

We give here a brief description of the families of games considered in this paper. For more formal definitions and further details, see Appendix~\ref{A:Games}. A \emph{2-player 0-sum Turn-Based Stochastic Game (2-TBSG)} is a perfect-information, infinite-duration game played by two players on a finite set of \emph{states}. One of the states is designated as the \emph{initial} state. The two players are referred to as the \emph{maximizer} and the \emph{minimizer}. Each state is controlled by one of the players. From each state there is a finite set of \emph{actions} that can be taken by the player controlling the state. Each action has an immediate \emph{reward} for the maximizer associated with it, which is also the immediate \emph{cost} incurred by the minimizer. Each action also has a probability distribution on states associated with it. The next state of the game is chosen according to this distribution, independent of all previous events.

The two players jointly construct a \emph{trajectory}, also known as a \emph{play}. The play is in general is a stochastic object, even if the players use deterministic strategies, see below, due to the randomized nature of the transitions. A play has an infinite sequence of rewards $r_0,r_1,\ldots$ associated with it. An \emph{aggregator} $\agg:\RR^\omega\to\RR$ is then used to convert this infinite sequence of rewards into a single real number which is considered to be the \emph{outcome} of the play. One of the simplest aggregators is the \emph{discounted} aggregator $\dsc_\lambda(r_0,r_1,\ldots) = (1-\lambda)\sum_{j=0}^\infty \lambda^j r_j$, where $0<\lambda<1$ is the discount factor. Another common aggregator is the \emph{limiting average} aggregator $\underline{\avg}(r_0,r_1,\ldots) = \liminf_{N\to\infty} \frac{1}{N}\sum_{j=0}^{N-1}r_j$.

A \emph{positional} strategy of a player is a choice of a specific action from each state controlled by the player. It is known that, under some minimal technical assumptions, both players have optimal positional strategies. (Due to the infinite nature of the games, this is a nontrivial theorem that requires a non-trivial proof. For more details see Appendix~\ref{A:Games}.)

The natural computational problem associated with 2-TBSG is: Given a 2-TBSG with a specified aggregator, find optimal positional strategies for the two players and the \emph{values} of the game starting from each state. It is a major open problem whether this problem can be solved in polynomial time.

For a positional strategy $\sigma$ of the maximizer, we let $val^{\sigma}(s)$ be the expected outcome of the game that start from state~$s$ when the maximizer uses $\sigma$ and the minimizer uses a \emph{best response} $\tau=\tau(\sigma)$, i.e., an optimal strategy for the minimizer when the maximizer uses~$\sigma$. We usually assume that the set of states is $[n]=\{1,2,\ldots,n\}$, where~$n$ is the number of states. We then let $val^{\sigma}=(val^{\sigma}(1),\ldots,val^{\sigma}(n))$. For two vectors $\bu,\bv\in\RR^n$ we say that $\bu\preceq\bv$  if and only if $u_i\le v_i$, for $i\in[n]$. We say that $\bu\prec\bv$ if and only if $\bu\preceq\bv$ and $u_i<v_i$ for at least one $i\in[n]$.

Suppose that $\sigma$ uses action $a$ from some state~$s$ and that $b$ is another action that can be chosen at~$s$. We let $\sigma[b]$ be the strategy obtained by using~$b$ instead of~$a$ from~$s$. It is known that for every~$\sigma$ and~$b$, either $val^{\sigma}\preceq val^{\sigma[b]}$ or $val^{\sigma[b]}\preceq val^{\sigma}$. For discounted 2-TBSGs, under some natural non-degeneracy assumption, we get that either $val^{\sigma}\prec val^{\sigma[b]}$ or $val^{\sigma[b]}\prec val^{\sigma}$. Furthermore, it is known that if $\sigma$ is not an optimal strategy for the maximizer, then there is at least one action~$b$ such that $val^{\sigma}\prec val^{\sigma[b]}$. For non-discounted 2-TBSGs, things are a bit more complicated. There may not be an action~$b$ for which $val^{\sigma}\prec val^{\sigma[b]}$ even if~$\sigma$ is not an optimal strategy.

To deal with non-discounted 2-TBSGs, we can define an appropriate \emph{extended valuation} $\VAL^{\sigma}$ for which the above properties do hold. (Each component of $\VAL^{\sigma}$ is usually a pair composed of $val^{\sigma}(i)$ and a \emph{potential}, or \emph{bias}, and pairs are ordered lexicographically. The exact definition is nontrivial, see \cite{AkianCDG13} and the other definitions in Appendix~\ref{A:Games}.) For discounted games we can simply take $\VAL^\sigma=val^\sigma$. Also, for any strategy~$\sigma$, $\VAL^\sigma$ can be computed in polynomial time.

This suggests a natural class of \emph{strategy iteration} algorithms for finding an optimal strategy for the maximizer. Start from any initial positional strategy~$\sigma$. For every action~$b$ not currently used by~$\sigma$, check whether $\VAL^{\sigma}\prec \VAL^{\sigma[b]}$. If so, move to $\sigma[b]$. Otherwise, $\sigma$ is an optimal strategy for the maximizer. An optimal strategy for the minimizer can be found in a similar manner. Even better, once an optimal strategy of the maximizer is known, an optimal strategy of the minimizer, which is an optimal strategy in a 1-player game, can actually be found in polynomial time.

Given a non-degenerate 2-TBSG, we can build a corresponding AUSO as follows. Assume that $S_0=[n_0]=\{1,2,\ldots,n_0\}$ is the set of states controlled by the maximizer. Let $A_1,A_2,\ldots,A_{n_0}$ be the set of actions available from each one of these $n_0$ states. A vertex $\ba\in\praa$ naturally corresponds to a positional strategy of the maximizer. If $\bb\in \praa$ is a neighbor of~$\ba$, direct the edge from~$\ba$ to~$\bb$ if $\VAL^{\ba}\prec\VAL^{\bb}$, and from~$\bb$ to~$\ba$ otherwise. The correctness of the strategy iteration algorithm implies that the obtained orientation is indeed an AUSO.

\section{The \RAR\ Algorithm}\label{sec:Random-Action}

Let $\vec{H}\in \vec{\mathcal{H}}_{AUSO}(\pra)$ be an AUSO.
We assume that the sets $A_1,A_2,\ldots,A_n$ are disjoint. 
A vertex $\ba=(a_1,\dots,a_n)\in\vec{H}$ can thus also be viewed as a set $\{a_1,\dots,a_n\}\in\bigcup_{i=1}^n A_i$. Our goal is to find the sink of $\vec{H}$.

We also use the following two convenient notations. If $\vec{H}\in \vec{\mathcal{H}}_{AUSO}(\pra)$ and $r\in \bigcup_{i=1}^n A_i$, we let $\vH\setminus r$ be the AUSO obtained by removing~$r$ from $\vH$. More formally, if $r\in A_i$, we let $A'_i=A_i\setminus\{r\}$ and $A'_j=A_j$, for $j\ne i$. Then, $\vH\setminus r=\vH(\prod_{i=1}^n A'_i)$. Also, if $\ba=(a_1,\dots,a_n)\in \vH$ and $r\in A_i\setminus\{a_i\}$, we let $\ba[r]=\ba\sm \{a_i\} \cup \{r\}$. Alternatively, $\ba[r]=(a'_1,\dots,a'_n)$, where $a'_i=r$ and $a'_j=a_j$, for $j\ne i$, i.e., $\ba[r]$ is obtained by switching the action at state~$i$ to~$r$, leaving all other actions unchanged.

The \RAR\ algorithm recursively constructs a directed path in $\vec{H}$ starting from some initial vertex $\ba\in\vec{H}$ and ending at the unique sink of $\vec{H}$ in the following way:

\begin{enumerate}
    \item If $\ba$ is the only vertex of $\vH$, and hence also its sink, return $\ba$.
    \item Choose a \emph{random action} $r\in (\bigcup_{i=1}^n A_i)\setminus \ba$, i.e., an action \emph{not} used by the current vertex~$\ba$, and \emph{remove} it from $\vH$ to obtain $\vH\setminus r$. 
    (This step gives the algorithm its name.)
    \item Recursively find a directed path in $\vH\setminus r$ leading from $\ba$, which is also a vertex of $\vH\setminus r$, to the sink $\bb$ of $\vH\setminus r$. 
    \item Let $\bc=\bb[r]$ be the vertex in which~$r$ replaces the appropriate action in~$\bb$. Since $\bb$ is the sink of $\vH\setminus r$, all edges incident on~$\bb$ are directed towards~$\bb$, except possibly the edge between $\bb$ and $\bc=\bb[r]$. If the edge from $\bc$ to $\bb$ is also directed towards $\bb$, then $\bb$ is also the sink of $\vH$ and the algorithm is done.
    \item Otherwise, the algorithm adds the edge $\bb\to\bc$ to the path and recursively finds a directed path in $\vH$ from $\bc$ to the sink of $\vH$.
    \end{enumerate}

Pseudocode of the \RAR\ algorithm is given in Figure~\ref{F:RAR-alg} of Appendix~\ref{A:RAR}.
The correctness of the \RAR\ algorithm is immediate. It constructs a directed path in~$\vH$ and only stops when a sink is reached. In principle, however, the algorithm may actually visit all vertices of $\vec{H}$. This happens, however, with tiny probability. Surprisingly, the complexity of \RAR\ is subexponential. 

When called on an AUSO with~$n$ states and~$m$ actions, \RAR\ first makes a recursive call on an AUSO with one action less, i.e., on an AUSO with~$n$ states and $m-1$ actions. It then conditionally makes a second recursive call on the same AUSO, but from a new starting vertex. The intriguing part of the analysis shows that the new starting vertex usually encodes a lot of information that vastly speeds up the second recursive call.

\section{Analyzing the \RAR\ algorithm}\label{sec:Analysis}

If $\vH\in\cH$, we assume that $f=f_{\vH}:\prod_{i=1}^n A_i\to\mathbb{Z}$ is an injective function 
such that $(\ba\to\bb)\in\vH$ implies $f(\ba)<f(\bb)$. Such a function exists as $\vH$ is acyclic. (See also Appendix~\ref{A:AUSOs}.) We stress that~$f$ is only used in the analysis and cannot be used by the algorithm. 
%
The following notion plays a central part in the analysis.

\begin{definition}[Frozen states]
    Let $\vH\in\cH$ and $\ba\in\pra$. State~$i$ is said to be \emph{frozen}, with respect to~$\ba=(a_1,\dots,a_n)\in \pra$, if and only if for every $\bb=(b_1,\dots,b_n)\in\pra$ such that $f(\ba)<f(\bb)$ we have $a_i=b_i$.
\end{definition}

If state~$i$ is frozen, with respect to the current vertex $\ba$, then state~$i$ and all its actions are effectively removed from~$\vH$. The algorithm is nominally running on $\vH$ but is effectively running on $\vH_{-i}\in \vec{\mathcal{H}}_{\text{AUSO}}(\prod_{j\ne i}A_i)$. The only difference between the operation of the algorithm on $\vH$ and on $\vH_{-i}$ is that on $\vH$ the algorithm occasionally checks whether the $i$-th edge of the current vertex is directed outwards and finds out that it is not. This has no effect on the length of the path produced by the algorithm. The same holds if several states are frozen. The intriguing part is that the algorithm usually does not know, and does not need to know, which states are frozen and how many of them there are.
The following definition and lemma explain how frozen states arise.

\newcommand{\drop}{-}

\begin{definition}
    If $\vH\in \cH$ and $b\in \bigcup_{i=1}^n A_i$, let $f^{\drop}(b)=f(\sink(\vH\setminus b))$.
\end{definition}


Recall that if $\bc\in\pra$ is a vertex and $b\in \bigcup_{i=1}^n A_i$ is an action, we say that $b\in\bc$ if and only if~$b$ is used by~$\bc$. (We view~$\bc$ both as a tuple and as a subset of $\bigcup_{i=1}^n A_i$.)

\begin{lemma}\label{L:basic}
    Let  $\vH\in \cH$, $\bc\in\pra$,
    and let $b_1,\dots,b_i\in \bigcup_{i=1}^n A_i$ be distinct actions. 
    If $f^\drop(b_1),\ldots,\allowbreak f^\drop(b_i)<f(\bc)$, then $b_1,b_2,\dots,b_i\in \bc$. Furthermore, each one of  
    $b_1,b_2,\dots,b_i$ belongs to a different state and the states to which they belong are frozen with respect to~$\bc$.
\end{lemma}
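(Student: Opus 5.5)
The plan is to reduce everything to a single key fact: if $b$ is an action with $f^\drop(b)<f(\bc)$, then every vertex $\bb$ with $f(\bb)\ge f(\bc)$ uses~$b$. All three claims of the lemma then follow quickly.

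\textbf{Step 1: the key fact.} Let $\bs=\sink(\vH\setminus b)$. Suppose, for contradiction, that some vertex $\bb$ with $f(\bb)\ge f(\bc)>f(\bs)$ does not use~$b$. Then $\bb$ is a vertex of the subcube $\vH\setminus b$. That subcube is itself an AUSO with unique sink~$\bs$, and it is acyclic. Starting at $\bb$ and repeatedly following outgoing edges inside $\vH\setminus b$ must therefore terminate at a sink, which can only be~$\bs$. Every edge $\ba\to\ba'$ satisfies $f(\ba)<f(\ba')$, so $f$ strictly increases along this directed path. Hence $f(\bb)<f(\bs)$ if $\bb\ne\bs$, and $f(\bb)=f(\bs)$ if $\bb=\bs$. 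Either way this contradicts $f(\bb)>f(\bs)$. The only ingredients are that every subcube has a unique sink and that a directed path from any vertex of a finite acyclic orientation reaches a sink; I expect this step to be the main (though modest) obstacle, since it is where the AUSO property is actually used.

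\textbf{Step 2: each $b_j$ belongs to~$\bc$.} Apply Step 1 with $\bb=\bc$ itself. This gives $b_j\in\bc$ for each $j=1,\dots,i$.

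\textbf{Step 3: the $b_j$ lie in different states.} A vertex uses exactly one action from each $A_\ell$. The actions $b_1,\dots,b_i$ are distinct and all of them are used by the single vertex~$\bc$, so no two of them can lie in the same $A_\ell$.

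\textbf{Step 4: the corresponding states are frozen.} Let $b_j\in A_{\ell_j}$, and take any $\bb$ with $f(\bb)>f(\bc)$. By Step 1, $\bb$ uses $b_j$, so $b_{\ell_j}=b_j=c_{\ell_j}$. This is exactly the definition of state $\ell_j$ being frozen with respect to~$\bc$.
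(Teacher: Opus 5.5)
Your proof is correct and has the same structure as the paper's: the paper's first and last steps are exactly your key fact, applied to $\bc$ and to an arbitrary $\bc'$ with $f(\bc')>f(\bc)$. The paper uses without comment that $\sink(\vH\setminus b)$ maximizes $f$ over the subcube $\vH\setminus b$, and your Step~1 supplies that justification. The one real deviation is the ``different states'' claim: the paper argues via the global sink $\bs$ (one of two same-state actions is unused by $\bs$, so its $f^{\drop}$ value equals $f(\bs)\ge f(\bc)$), whereas your observation that distinct actions used by the single vertex $\bc$ must lie in distinct states is shorter and equally valid.
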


\begin{proof}
    If $b_j\notin \bc$ then $f^\drop(b_j)\ge f(\bc)$, a contradiction. If $b_j$ and $b_k$, where $j\ne k$, belong to the same state, then at least one of them, say, $b_j$, is not used by the sink~$\bs$ of~$\vH$. Thus $f^\drop(b_j)=f(\bs)\ge f(\bc)$, again a contradiction. Finally, for every $j\in[i]$, and every $\bc'\in \pra$ such that $f(\bc)<f(\bc')$ we have $f^{\drop}(b_j)<f(\bc)<f(\bc')$ and hence $b_j\in \bc'$. In other words, $b_j$ is frozen with respect to~$\bc$.
\end{proof} 

We assume that $|A_1|,|A_2|,\dots,|A_n|\ge 2$. If $|A_i|=1$, for some $i\in[n]$, then there is no choice to make at~$i$ and the state can be removed. Let $m=\sum_{i=1}^n|A_i|$ be the total number of actions. As we always have $m\ge 2n$, if it convenient to let $m=2n+k$ and perform the analysis in terms of~$n$ and~$k$. At the end we can of course recover the complexity in terms of~$n$ and~$m$. If $k=0$, there are exactly two actions per state and the AUSO is \emph{binary}.

We next define a function $F(n,k)$ and then show that it bounds the expected length of the path constructed by \RARalg\ when run on \emph{any} AUSO with $n$ states and $2n+k$ actions.

\begin{definition}
\label{D:F(n,k)}
    Define a function $F(n,k)$, for $n,k\ge 0$, using the following recurrence relation:
\begin{gather}
   F(n,k) \;=\; F(n,k-1) + \frac{1}{n+k}\sum_{i=1}^n F(n-i,k) \quad,\quad n,k>0 \;, \label{F(n,k)-1} \\
   F(n,0) \;=\; F(n-1,0) + \frac{1}{n}\sum_{i=1}^n F(n-i,0) \quad,\quad n>0 \;, \\
   F(0,k) \;=\; 1 \quad,\quad k\ge 0\;. 
\end{gather}
\end{definition}

Some values of the function $F(n,k)$ for small values of~$n$ and~$k$ are given in Figure~\ref{fig:F-values} of Appendix~\ref{A:F}.

\begin{lemma}
    For every $n,k\ge 0$, the expected length of the path constructed by \RAR\ when run on any AUSO with $n$ states and $2n+k$ actions, starting from any initial vertex, is at most $F(n,k)$.
\end{lemma}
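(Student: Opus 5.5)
We prove the statement by induction on $n+k$, or more precisely on the total number of actions $m=2n+k$. The induction hypothesis covers every AUSO with fewer actions. It also covers every AUSO with the same number of actions in which the path starts from a vertex with larger~$f$ value. The measured quantity is the number of edges $\bb\to\bc$ added in step~5, plus one to account for the base case. This matches $F(0,k)=1$: with no states the cube is a single vertex.

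We first need monotonicity of $F$ in both arguments, which follows from the recurrence by a straightforward induction. Monotonicity lets us replace the parameters of a subcube by upper bounds on them. We also use the freezing mechanism: if $j$ states are frozen with respect to the current vertex, the algorithm behaves on the remaining path exactly as on the AUSO with those states deleted. That AUSO has $n-j$ states and at most $2(n-j)+k$ actions, since deleting a state with $|A_i|\ge 2$ removes at least two actions. Its expected remaining path length is therefore at most $F(n-j,k)$.

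The key step concerns one call on $\vH$ from $\ba$ with $n$ states and $2n+k$ actions, so that $\ba$ uses $n$ actions. When $k>0$ there are $n+k$ candidate actions $r\notin\ba$. The first recursive call is on $\vH\sm r$, which has $2n+k-1$ actions. If $r$'s state keeps at least two actions, this AUSO has parameters $(n,k-1)$. If $r$'s state drops to one action, the state disappears and the parameters become $(n-1,k+1)$. We therefore need $F(n-1,k+1)\le F(n,k-1)$. I expect this to follow from the recurrence as well, but it must be checked; alternatively the recurrence might be organized so that only the first case arises when $k>0$.

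Next, order the $n+k$ candidate actions $b_1,\dots,b_{n+k}$ by increasing $f^{\drop}(b_j)$. If the chosen $r$ is $b_j$, the second call starts at $\bc$ with $f(\bc)>f(\bb)=f^{\drop}(b_j)\ge f^{\drop}(b_1),\dots,f^{\drop}(b_{j})$. Hence by Lemma~\ref{L:basic}, applied to $b_1,\dots,b_j$, all of these actions lie in $\bc$, in distinct states, and those $j$ states are frozen. The $n$ states cap this at $j\le n$, and for $j>n$ Lemma~\ref{L:basic} forces a contradiction, meaning no second call occurs at all. So with probability $1/(n+k)$ each, the second call costs at most $F(n-j,k)$ for $j=1,\dots,n$, and it costs nothing for $j>n$. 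Summing gives exactly the recurrence~\eqref{F(n,k)-1}, with the $+1$ for the improving edge absorbed in the standard way.

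In the binary case $k=0$ there are $n$ candidate actions, and removing one deletes its state entirely. The first call therefore costs $F(n-1,0)$, and the same argument yields the second equation of the recurrence.

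The main obstacle is the bookkeeping. First, the improving edge has to be counted correctly against the $F(0,\cdot)=1$ convention. Second, we have to justify rigorously that frozen states reduce the second call to an AUSO with $n-j$ states and excess still at most $k$. This is where the hypercube structure, each state having $\ge 2$ actions, is essential.
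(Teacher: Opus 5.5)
Your argument is the paper's in all essentials. You order the $n+k$ unused actions by $f^{-}$, apply Lemma~\ref{L:basic} at $\bc$ to freeze the states of $b_1,\dots,b_j$, bound the second call by $F(n-j,k)$, and handle $k=0$ by noting that the state of $r$ becomes degenerate. (Disposing of $j>n$ through the contradiction in Lemma~\ref{L:basic} is a valid substitute for the paper's observation that $f^{-}(b_{n+1})=\dots=f^{-}(b_{n+k})=f(\bs)$.)

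The one step where you go beyond the paper is wrong, though. For $k>0$, suppose the state of $r$ has exactly two actions. Removing $r$, and then the degenerate state together with its single remaining action, leaves $n-1$ states and $2n+k-2=2(n-1)+k$ actions. The parameters are therefore $(n-1,k)$, not $(n-1,k+1)$; you count correctly in the binary case. This is not cosmetic. The inequality you propose to check, $F(n-1,k+1)\le F(n,k-1)$, is false: for $(n,k)=(4,1)$ it would say $F(3,2)=\frac{48}{5}\le\frac{209}{24}=F(4,0)$. Nor can this case be avoided, since $r$ is uniform over all unused actions and the case occurs whenever some state is binary. As written, the bound $F(n,k-1)$ on the first recursive call is unjustified.

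What is actually needed is $F(n-1,k)\le F(n,k-1)$ for $n,k\ge 1$, and this is true. Put $G(n,k)=F(n,k)-F(n-1,k+1)$ for $n\ge 1$, $k\ge 0$. The recurrences defining $F(n,k)$ and $F(n-1,k+1)$ have the same denominator $n+k$, and subtracting them gives
\begin{gather*}
G(n,k) \EQ G(n,k-1)+\frac{1}{n+k}\Bigl(1+\sum_{j=1}^{n-1}G(j,k)\Bigr) \quad (n\ge 2,\ k\ge 1), \\
G(n,0) \EQ \frac{1}{n}\Bigl(1+\sum_{j=1}^{n-1}G(j,0)\Bigr) \quad (n\ge 2),
\end{gather*}
while $G(1,k)=F(1,k)-1\ge 1$. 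Hence $G>0$ by induction, and $G(n,k-1)>0$ is exactly the inequality you need.

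The paper itself simply asserts the bound $F(n,k-1)$ for the first call and never mentions this case. With the correction, your treatment is more careful than the paper's.

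Your other concern, that frozen states can be deleted, is handled in the paper at the same informal level as in your proposal. A short induction shows the distribution of the produced path is literally unchanged. If $r$ is drawn from a frozen state, the algorithm just recurses on $\vH\sm r$, which has the same reduced cube, and then stops.
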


\begin{proof}
    Let $\vH\in \cH$ be the AUSO on which the algorithm is run and let $\ba\in\pra$ be the initial vertex. Let $B = (\bigcup_{i=1}^n A_i)\setminus \ba$ be set of actions not used by~$\ba$. Note that $|B|=n+k$. Let $b_1,b_2,\ldots,b_{n+k}$ be the actions in~$B$ sorted such that
    \[ f^{\drop}(b_1) \le f^{\drop}(b_2) \le \dots \le f^{\drop}(b_n) \le f^{\drop}(b_{n+1}) \le \dots \le f^{\drop}(b_{n+k}) \;.\]
    Let $\bs=\sink(\vH)$. For every $b\not\in \bs$ we have $f^{\drop}(b)=f(\bs)$. Since $|B\sm \bs|\ge k$, we actually have:
    \[ f^{\drop}(b_{n+1}) = \dots = f^{\drop}(b_{n+k})  = f(\bs)\;.\]
    The algorithm chooses a random action $r\in B$ and makes a recursive call on $\vH\sm r$, starting from~$\ba$, producing a directed path from~$\ba$ to~$\bb=\sink(\vH\sm r)$. If $\bb=\sink(\vH\sm r)=\sink(\vH)$, which happens for example if $r\in\{b_{n+1},\ldots,b_{n+k}\}$, then $\bb=\sink(\vH)$ and the algorithm terminates.

    Suppose now that $r=b_i$, for $i\in[n]$. If $\bb=\sink(\vH)$, we are again done. Otherwise, we have $(\bb\to\bc)\in \vH$, where $\bc=\bb[r]$. The algorithm now moves from $\bb$ to~$\bc$. Note that 
    \[ f^{\drop}(b_i) = f(\vH\sm b_i) = f(\bb)<f(\bc) \;.\]
    By the ordering of the $b_i$'s we have:
    \[ f^{\drop}(b_1) \le \dots \le f^{\drop}(b_i) <f(\bc) \;.\]
    Here comes the magic. By Lemma~\ref{L:basic}, we have that $b_1,\ldots,b_i\in \bc$. Furthermore, $b_1,\ldots,b_i$ belong to~$i$ different states and the states to which they belong are now \emph{frozen}. That means that the second recursive call of the algorithm is effectively running on an AUSO with only $n-i$ states. 

    We now argue that the expect length of the path produced by \RAR\ on an AUSO on~$n$ states and $2n+k$ actions, starting from any starting vertex, is bounded by $F(n,k)$. The proof is by induction on $n$ and $k$. If $n=0$, then algorithm returns immediately with a path of length~$1$.
    
    Assume now that $n,k>0$. For every $i\in[n+k]$ the probability that $r=b_i$ is exactly $\frac{1}{n+k}$. (The order we gave to the actions in~$B$ is not known to the algorithm, but this is not needed.) By induction, the expected length of the path constructed by the recursive call on $\vH\sm r$ is at most $F(n,k-1)$. If $i>n$, the algorithm terminates. Otherwise, the algorithm moves to $\bc=\bb[r]$ and starts a second recursive call from there. As argued, $\vH$ now has at least $i$ states frozen with respect to~$\bc$. These states and their actions are effectively removed from $\vH$. (This is the magic.) Thus, the second recursive call is effectively on an AUSO with $n-i$ states and $(2n+k)-2i=2(n-i)+k$ actions. (Note that by removing $i$ states we also remove at least $2i$ actions.) Thus, by the induction hypothesis, the length of the path created by this second recursive call is at most $F(n-i,k)$. The expected total length of the path constructed is thus at most
    $F(n,k-1) + \frac{1}{n+k}\sum_{i=0}^n F(n-i,k)$ which is exactly $F(n,k)$.

    We are left with the case $k=0$. This corresponds to the case of \emph{binary} AUSOs, i.e., AUSOs in which $|A_i|=2$, for every $i\in[n]$. The argument of the previous paragraph still applies, with one important change. The first recursive call removes an action from $\vH$. In the binary case, the state to which the removed action belongs becomes degenerate, i.e., it is left with only one action and this state can effectively be removed from the AUSO. (In particular, this state is frozen.) Thus, the first recursive call is on an AUSO with $n-1$ states and $2(n-1)$ actions. By induction, the expected length of the path returned by the first recursive call is at most $F(n-1,0)$. This gives the modified recurrence relation for the $k=0$ case. 
\end{proof}

We are therefore left with the (nontrivial) task of analyzing the recurrence relation defining $F(n,k)$. An interesting equivalent version of the recurrence relation defining $F(n,k)$ 
is given in Appendix~\ref{A:equiv}.

\section{Solving the recurrence for $F(n,0)$}\label{sec:Binary}

For brevity, we let $F(n)=F(n,0)$. Note that $F(n)$ is an upper bound on the length of the path constructed by the \RAR\ algorithm when called on a binary AUSOs with $n$ states. The function $F(n)$ satisfies the following recurrence relation.
\begin{gather*}
 F(n) \;\le\; F(n-1) + \frac{1}{n}\sum_{i=1}^n F(n-i) \quad,\quad n>0 \;, \\
 F(0) \EQ 1 \;.
\end{gather*}
G{\"a}rtner \cite{Gartner02} obtained the following lemma, easily proved by induction:

\begin{lemma}[G{\"a}rtner \cite{Gartner02}]\label{L:F}
    $\displaystyle \quad F(n) \EQ \sum_{r=0}^n \binom{n}{r}\frac1{r!} \LT \sum_{r=0}^n \frac{n^r}{(r!)^2} \LT \left(\sum_{r=0}^n \frac{n^{r/2}}{r!}\right)^2 \LT e^{2\sqrt{n}}\;.$
\end{lemma}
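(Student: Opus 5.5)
The plan is to prove the identity $F(n)=\sum_{r=0}^n\binom{n}{r}\frac1{r!}$ by induction on $n$, and then obtain the three inequalities by direct elementary estimates. Write $G(n)=\sum_{r}\binom{n}{r}\frac1{r!}$, so that $G(0)=1$. Multiplying the recurrence (taken with equality, as in Definition~\ref{D:F(n,k)}) by $n$ gives
\[ nF(n)-nF(n-1) \EQ \sum_{j=0}^{n-1}F(j). \]
Subtracting the same relation at $n-1$ removes the sum and yields the three-term recurrence
\[ nF(n) \EQ (2n-1)F(n-1) - (n-2)F(n-2), \qquad n\ge 2, \]
with $F(0)=1$ and $F(1)=2$. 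It then suffices to check that $G$ satisfies the same recurrence and the same initial values ($G(1)=1+1=2$).

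The coefficient comparison for $G$ is the step I expect to need the most care. Writing $g_r(n)=\binom{n}{r}/r!$, I would use Pascal's rule $\binom{n}{r}=\binom{n-1}{r}+\binom{n-1}{r-1}$ to get
\[ G(n)-G(n-1) \EQ \sum_r \binom{n-1}{r-1}\frac1{r!}. \]
The recurrence to verify is equivalent to $n\bigl(G(n)-G(n-1)\bigr)=(n-2)\bigl(G(n-1)-G(n-2)\bigr)+G(n-1)$. Using the Pascal form on both sides, this reduces to the termwise identity
\[ n\binom{n-1}{r-1}\frac1{r!} \EQ (n-2)\binom{n-2}{r-1}\frac1{r!}+\binom{n-1}{r-1}\frac1{(r-1)!}, \]
which I would check by expressing everything via $\binom{n-2}{r-1}$ and $\binom{n-2}{r-2}$. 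If this direct verification turns out messy, the fallback is the original sum recurrence. One verifies $n(G(n)-G(n-1))=\sum_{j<n}G(j)$ using the hockey-stick identity $\sum_{j<n}\binom{j}{r}=\binom{n}{r+1}$. The left side is $\sum_r n\binom{n-1}{r-1}/r! = \sum_r r\binom{n}{r}/r! = \sum_r \binom{n}{r}/(r-1)!$, and the right side is $\sum_r \binom{n}{r+1}/r!$, which is the same sum after reindexing. This route is actually cleaner, so I would use it as the main argument.

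The inequalities are routine:
\begin{itemize}
\item $\binom{n}{r}\le n^r/r!$, with strict inequality for $r\ge 2$ once $n\ge 2$ (the small cases $n\le 1$ need a separate remark, or the strict inequality is read as $\le$ there).
\item Expanding the square $\bigl(\sum_r n^{r/2}/r!\bigr)^2$ produces the diagonal terms $n^r/(r!)^2$ plus positive cross terms.
\item $\sum_{r\le n} n^{r/2}/r! < \sum_{r\ge0}(\sqrt n)^r/r! = e^{\sqrt n}$.
\end{itemize}
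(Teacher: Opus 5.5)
Your main argument is correct. The identity $n\bigl(G(n)-G(n-1)\bigr)=\sum_{j<n}G(j)$ follows exactly as you say from $n\binom{n-1}{r-1}=r\binom{n}{r}$ and the hockey-stick identity, so $F=G$ by strong induction. This is the induction the paper has in mind; it gives no details beyond ``easily proved by induction''. Your three elementary estimates are also correct. Your caveat about strictness is justified: at $n=0$ all four quantities equal $1$, and at $n=1$ the first inequality reads $2=2$. The statement should therefore be read with $\le$, as in the appendix version (Lemma~\ref{L:Fa}).

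The part you should delete or fix is the three-term recurrence. Subtracting the relation at $n-1$ from the one at $n$ gives $n\bigl(F(n)-F(n-1)\bigr)=(n-1)\bigl(F(n-1)-F(n-2)\bigr)+F(n-1)$. That is, $nF(n)=2nF(n-1)-(n-1)F(n-2)$, which is the Laguerre recurrence at $x=-1$ quoted in the paper. It is not $nF(n)=(2n-1)F(n-1)-(n-2)F(n-2)$: your version already gives $F(2)=3$ instead of $7/2$.

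For the same reason, the termwise identity you planned to verify is false. Multiply it by $r!$ and use $(n-r)\binom{n-1}{r-1}=(n-1)\binom{n-2}{r-1}$; it reduces to $(n-1)\binom{n-2}{r-1}=(n-2)\binom{n-2}{r-1}$, which fails for $1\le r\le n-1$. Replacing $n-2$ by $n-1$ makes it true. Since the hockey-stick route already proves the needed identity directly, the detour is unnecessary.
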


Lifschitz and Pittel \cite{LiPi81} showed that 
$F(n) = \frac{e^{2\sqrt n}}{2\sqrt{\pi e}\,n^{1/4}}(1+O(n^{-1/2}))$. (See Appendix~\ref{A:Binary}.)

\section{A simple upper bound on $F(n,k)$}\label{S:Simple}

The upper bound $F(n)\le e^{2\sqrt{n}}$ can be used to obtain the following simple, non-trivial, and sometimes asymptotically tight, upper bound on $F(n,k)$. The simple proof of the Lemma is given in Appendix~\ref{A:Simple}.

\begin{lemma}\label{L:n+k}
  For every $n,k\ge 0$ we have: $\quad F(n,k) \LE F(n+k) \LT e^{2\sqrt{n+k}} \;.$
\end{lemma}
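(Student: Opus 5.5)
We prove $F(n,k)\le F(n+k)$ by induction on $n+k$. The second inequality $F(n+k)<e^{2\sqrt{n+k}}$ is exactly Lemma~\ref{L:F}, so only the first needs work. Besides the claim itself, we will use two elementary facts about $F(n)=F(n,0)$.
\begin{enumerate}
\item \emph{Monotonicity:} $F(n)$ is nondecreasing in~$n$. This is immediate from the recurrence, since $F(n)=F(n-1)+\frac1n\sum_{i=1}^n F(n-i)\ge F(n-1)$ and all values are positive.
\item \emph{A rewriting of the binary recurrence.} Write $N=n+k$. Then
\[ F(N)=F(N-1)+\frac1N\sum_{j=0}^{N-1}F(j). \]
\end{enumerate}

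\textbf{Base cases.} For $k=0$ the claim is an equality. For $n=0$ we have $F(0,k)=1\le F(k)$, by monotonicity and $F(0)=1$.

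\textbf{Inductive step.} Take $n,k>0$ and apply the recurrence~\eqref{F(n,k)-1}:
\[ F(n,k)=F(n,k-1)+\frac1{n+k}\sum_{i=1}^n F(n-i,k). \]
By induction, $F(n,k-1)\le F(n+k-1)=F(N-1)$. Likewise each term $F(n-i,k)$, whose argument sum is $n+k-i<n+k$, satisfies $F(n-i,k)\le F(N-i)$. Hence
\[ F(n,k)\le F(N-1)+\frac1N\sum_{i=1}^n F(N-i)=F(N-1)+\frac1N\sum_{j=k}^{N-1}F(j). \]
Every summand is nonnegative, so extending the range to $j=0,\dots,N-1$ only enlarges the sum. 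The right-hand side is therefore at most $F(N-1)+\frac1N\sum_{j=0}^{N-1}F(j)=F(N)$.

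The argument is routine, and no step presents a real obstacle. The only point needing care is that the induction must be on $n+k$ rather than on $n$ and $k$ separately. This ordering is what makes both recursive terms, $F(n,k-1)$ and $F(n-i,k)$, available to the induction hypothesis with targets $F(N-1)$ and $F(N-i)$. The recurrence for $F(n,k)$ has $n$ nonnegative summands over $N=n+k$, while the recurrence for $F(N)$ has $N$ summands over the same $N$. The extra summands $F(0),\dots,F(k-1)$ are simply dropped.
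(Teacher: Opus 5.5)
Your proof is correct and follows the paper's argument: you bound $F(n,k-1)\le F(N-1)$ and $F(n-i,k)\le F(N-i)$ by induction, add the nonnegative terms $F(0),\dots,F(k-1)$ to recover the recurrence for $F(N)$, and quote Lemma~\ref{L:F} for the second inequality. Your monotonicity remark justifies the base case $F(0,k)=1\le F(k)$, which the paper leaves implicit; one caveat, which your proof inherits from the paper's statement, is that the strict inequality $F(n+k)<e^{2\sqrt{n+k}}$ fails at $n=k=0$, where both sides equal $1$.
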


We believe that for $0\le k\le n$, the bound $F(n,k)\le e^{2\sqrt{n+k}}$ is asymptotically tight, up to polynomial factors, but it becomes extremely loose, as we shall see, when $k>n$.

\section{A tighter upper bound on $F(n,k)$, when $k\ge n$}\label{S:Recurrence}

We now present an improved analysis that yields a tighter upper bound on $F(n,k)$ when $k>n$.

We define an explicit function $U(n,k)$ and show by induction that $F(n,k)\le U(n,k)$. For the induction to work, $U(n,k)$ should capture the correct asymptotic behavior of $F(n,k)$, at least at the exponential scale. The definition of $U(n,k)$ is far from trivial or obvious. In Appendix~\ref{app:U} we explain how this form of~$U(n,k)$ was derived. We define:
\begin{equation}
    U(n,k) \EQ (n+k)\, e^{2\sqrt{n\Phi(\frac{k}{n})}} \;,
\end{equation}
where $\Phi=\Phi(\alpha)$ is a solution of the following differential equation
\begin{equation}
    \alpha (\Phi')^2 - \Phi \Phi' + \frac{\Phi}{1+\alpha} \EQ 0 \;, \label{eq:Phi}
\end{equation}
together with the initial condition $\Phi(1)=2$. Solving the quadratic equation for $\Phi'=\Phi'(\alpha)$ we get
\begin{equation}
    \Phi' \EQ \frac{\Phi \pm \sqrt{\Phi^2 - \frac{4\alpha}{1+\alpha}\Phi}}{2\alpha} \;. \label{eq:Phi2}
\end{equation}
Thus, the differential equation~\eqref{eq:Phi} actually has two different solutions $\Phi_+(\alpha)$ and $\Phi_-(\alpha)$ obtained by adding or subtracting the square root in equation~\eqref{eq:Phi2}. (To show that these two solutions are well-defined, we need to show that $\Phi\ge \frac{4\alpha}{1+\alpha}$ so that the square root produces a real number.) To obtain a tighter upper bound we naturally choose $\Phi(\alpha)=\Phi_-(\alpha)$.

For $0\le \alpha\le 1$, we have $\Phi_-(\alpha)=1+\alpha$, as can be easily checked by differentiating. At $\alpha=1$, the discriminant, i.e., the square root, becomes~$0$. Also note that $\Phi_-(1)=2$. For $\alpha>1$ we actually have $\Phi_+(\alpha)=1+\alpha$ while $\Phi_-(\alpha)$ grows much more slowly. There does not seem to be a closed-form expression for $\Phi(\alpha)=\Phi_-(\alpha)$, when $\alpha>1$, but we can compute it numerically by integrating the differential equation. We can also prove rigorous lower and upper bounds on $\Phi(\alpha)$ and analyze its asymptotic behavior. In Appendix~\ref{app:Phi} we prove the following technical lemma that show that $\Phi(\alpha)=\Phi_-(\alpha)$ is well-defined and that its growth rate is essentially logarithmic, rather than linear, and that it is strictly concave for $\alpha> 1$.

\begin{lemma}\label{L-ul-Phi}
    The function $\Phi(\alpha)=\Phi_-(\alpha)$ is well-defined for every $\alpha\ge 0$. For $0\le \alpha\le 1$, we have $\Phi(\alpha)=1+\alpha$. For every $\alpha> 1$, $\Phi(\alpha)$ is strictly concave, i.e., $\Phi''(\alpha)<0$, and 
    \[ 2+\s+\ln(\s+1)-\left(\frac{\s}{\s+1}\right)^3 \LE \Phi(\alpha) \LE 2+\s+\ln(\s+1) \quad,\quad \s=\ln\frac{1+\alpha}{2}\;.\]
\end{lemma}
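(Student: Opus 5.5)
The plan is to move to the logarithmic variable $\s=\ln\frac{1+\alpha}{2}$, so that $\alpha=2e^{\s}-1$. Write $\psi(\s)=\Phi(\alpha)$. Since $\frac{d\psi}{d\s}=(1+\alpha)\Phi'$, multiplying \eqref{eq:Phi} by $1+\alpha$ turns it into
\[ c(\s)\,(\psi')^2-\psi\psi'+\psi \EQ 0,\qquad c(\s)=\frac{\alpha}{1+\alpha}=1-\tfrac12 e^{-\s}, \]
with $\psi(0)=2$. Choosing the minus root and rationalizing gives the explicit form
\[ \psi' \EQ \frac{2}{1+\sqrt{1-4c(\s)/\psi}} \;. \]
The range $0\le\alpha\le1$ is handled by direct substitution: $\Phi=1+\alpha$ makes the discriminant $(1+\alpha)^2-4\alpha=(1-\alpha)^2$, which is a perfect square, and we take the appropriate branch. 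Everything else concerns $\s>0$.

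\emph{Step 1: well-definedness.} Put $g=\psi-4c$; the square root is real exactly when $g\ge0$. At $\s=0$ we have $g=0$, $\psi'=2$ and $4c'=2e^{-\s}=2$, so $g'(0)=0$. I therefore expand to second order. Writing $\psi'=2-O(\sqrt{g/\psi})$ gives $g'\ge 2-2e^{-\s}-O(\sqrt g)$. From this I argue that $g$ cannot return to $0$: at a first return we would need $g'\le 0$ while $2-2e^{-\s}>0$. The delicate part is a local existence argument at the singular point $\s=0$, where the right-hand side is not Lipschitz. I would settle it with an explicit local ansatz $\psi=2+2\s+a\s^2+\dots$, or by Peano's theorem together with the monotone structure in $\psi$ (the right-hand side is decreasing in $\psi$), which also gives uniqueness. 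Once $g>0$, the equation is smooth and the solution extends to all $\s>0$, because $1\le\psi'\le2$ rules out blow-up.

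\emph{Step 2: the bounds, by comparison.} The map $(\s,\psi)\mapsto 2/(1+\sqrt{1-4c/\psi})$ is decreasing in $\psi$ and increasing in $c$. For the upper bound I use the barrier $V(\s)=2+\s+\ln(1+\s)$ and check the supersolution inequality $V'\ge 2/(1+\sqrt{1-4c/V})$, that is, $1+\frac1{1+\s}\ge$ the right-hand side. A sufficient route is to replace $c$ by $1$ and verify the resulting inequality $\bigl(1-\frac{1}{\s+2}\bigr)^2\le 1-\frac{4}{V}$, which rearranges to a polynomial-log inequality in $\s$. The crude replacement $c\le1$ may fail near $\s=0$, since at $\s=0$ both sides equal $2$ only when $c=\tfrac12$ is used. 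If it does, I keep the exact $c=1-\tfrac12e^{-\s}$ on a small interval and use the crude bound beyond it. For the lower bound I use $L=V-\bigl(\frac{\s}{\s+1}\bigr)^3$ as a subsolution, checking $L'\le$ the right-hand side with $c$ exact. The cubic correction is what absorbs the deficit coming from $c<1$; the dominant term is $c\approx 1-\tfrac12e^{-\s}$ against the $1/(1+\s)$ term. Both barriers equal $2$ at $\s=0$ with slope $2$, so the comparison starts at a tangency and needs care there, but no separate argument beyond the inequalities.

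\emph{Step 3: concavity.} Differentiate \eqref{eq:Phi} with respect to $\alpha$ and solve for $\Phi''$:
\[ \Phi''\bigl(2\alpha\Phi'-\Phi\bigr) \EQ -(\Phi')^2+(\Phi')^2-\frac{\Phi'}{1+\alpha}+\frac{\Phi}{(1+\alpha)^2}, \]
which simplifies to $\Phi''\,(2\alpha\Phi'-\Phi)=\frac{\Phi-(1+\alpha)\Phi'}{(1+\alpha)^2}$. On the minus branch, $2\alpha\Phi'-\Phi=-\sqrt{\Phi^2-\frac{4\alpha}{1+\alpha}\Phi}<0$ for $\alpha>1$. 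It then suffices to show $\Phi>(1+\alpha)\Phi'=\psi'$. This holds because $\psi'\le2<\Phi$ for $\s>0$, as $\psi$ is increasing from $2$.

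I expect the main obstacle to be Step 2's barrier inequalities, especially the lower bound with its cubic correction near $\s=0$, where both sides agree to first order. The singular start in Step 1 is the secondary difficulty.
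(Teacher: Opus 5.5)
Your lower-bound step cannot work as planned. $L$ is not a subsolution on all of $(0,\infty)$, so the inequality $L'\le h(\ell,L)$ you intend to verify is false for large $\ell$, however carefully $\ell=0$ is treated.

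Put $u=1+\ell$. Since $c=1-\frac12e^{-\ell}$ is exponentially close to $1$, up to exponentially small terms $h(\ell,\psi)=1+\psi^{-1}+2\psi^{-2}+O(\psi^{-3})$ (the Catalan series). Meanwhile $L=u+\ln u+\frac3u+O(u^{-2})$ and $L'=1+\frac1u-\frac{3(u-1)^2}{u^4}$. Hence
\[ h(\ell,L)-L' \;=\; \frac{5-\ln u}{u^2}+O\Bigl(\frac{\ln^2 u}{u^3}\Bigr), \]
which becomes negative once $\ln u$ is a little above $5$. By my hand computation the sign changes between $\ell=140$ and $\ell=150$, and at $\ell=1000$ one gets $L'-h(\ell,L)\approx 1.9\cdot 10^{-6}>0$.

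The cause is not the deficit from $c<1$, as you suggest; that deficit is exponentially negligible there. The cause is the missing $\frac{\ln u}{u}$ term. The solution behaves like $\ell+\ln\ell+a+\frac{\ln\ell}{\ell}+\cdots$, and no barrier of the form $u+\ln u+\beta+\frac{\gamma}{u}+O(u^{-2})$ can be a subsolution all the way to infinity.

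The paper asserts exactly the same inequality (``$L'\le f(\alpha,L)$ for $\alpha\ge 1$'') without showing the calculation, so its argument has the same hole. The lower bound is not needed for the paper's main theorem.

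The lower bound itself does seem to be true, with room to spare. Since $\psi\le V$ and $h$ is nonincreasing in $\psi$, you have $V-\psi\le\int_0^\ell\bigl(V'-h(s,V(s))\bigr)\,ds$. The integrand is $\sim s^3/12$ near $0$ and $\sim(\ln s)/s^2$ at infinity, and my rough numerical estimate of the total integral is about $0.33$, whereas $V-L=\bigl(\frac{\ell}{\ell+1}\bigr)^3$. Turning this into a proof requires one of the following:
\begin{itemize}
\item an explicit bound on that integral;
\item the $L$-barrier on a bounded range combined with a separate tail argument;
\item a barrier containing a $\frac{\ln(1+\ell)}{1+\ell}$ correction.
\end{itemize}

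Two smaller points.

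First, in the upper bound you mis-solved for the square root. The inequality $V'\ge h(\ell,V)$ is equivalent to $\sqrt{1-4c/V}\ge\frac{2}{V'}-1=\frac{\ell}{\ell+2}$. Your condition $(1-\frac{1}{\ell+2})^2\le 1-\frac{4}{V}$ in fact fails for every $\ell\ge 0$. With the correct form, putting $c=1$ reduces to $\ln(1+\ell)\ge 1+\frac{1}{1+\ell}$, which holds only for $\ell\gtrsim 2.6$, so the exceptional interval is not small. No splitting is needed, though. With the exact $c$ the inequality reads $V\ge(1-\frac12e^{-\ell})\frac{(\ell+2)^2}{\ell+1}$, which is the paper's $W(\ell)\ge 0$. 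It follows from $W(0)=0$ and $W'(\ell)=\frac{\ell+2}{2e^{\ell}(\ell+1)^2}\bigl(2e^{\ell}-\ell^2-2\ell-2\bigr)\ge 0$.

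Second, in Step~1 Peano's theorem does not apply directly, because $(0,2)$ lies on the boundary $\psi=4c$ of the domain of $h$. Extend $h$ continuously, e.g.\ by the constant $2$ for $\psi<4c$. This keeps it nonincreasing in $\psi$, so uniqueness and comparison survive, and your first-return argument shows the solution never uses the extension.

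Apart from this, Steps~1 and~3 are sound and follow the paper in substance. Your first-return argument replaces the paper's comparison with $\frac{4\alpha}{1+\alpha}$. Your concavity argument, $\Phi-(1+\alpha)\Phi'=\psi-\psi'>0$ because $\psi>2\ge\psi'$, is simpler than the paper's.
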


Plugging $\Phi(\alpha)=1+\alpha$ for $0\le\alpha\le 1$ gives us an upper bound $F(n,k)\le U(n,k)=(n+k)\,e^{2\sqrt{n+k}}$ which is slightly worse than what we already have. However, when $\alpha>1$ the upper bound obtained becomes much better than the upper bound obtained in the previous section.

Our goal is proving the following theorem, which implies the main result of the paper:

\begin{theorem}\label{T-upper}
    For every $n,k>0$ we have $F(n,k) \LE U(n,k) \EQ (n+k)\, e^{2\sqrt{n\Phi(\frac{k}{n})}}$.
\end{theorem}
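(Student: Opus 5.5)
We prove the statement by induction on $n+k$. The induction checks that $U$ satisfies the recurrence~\eqref{F(n,k)-1} with ``$\ge$'' in place of equality, i.e.,
\[ U(n,k) \GE U(n,k-1) + \frac{1}{n+k}\sum_{i=1}^n U(n-i,k)\;, \]
with the boundary values handled directly. Set $g(n,k)=2\sqrt{n\Phi(k/n)}$, so that $U=(n+k)e^{g}$.

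\textbf{Base cases.} For $n=0$ we have $U\ge 1$, interpreting $n\Phi(k/n)\to 0$. For $k=0$ or $k\le n$, Lemma~\ref{L:n+k} gives $F(n,k)\le e^{2\sqrt{n+k}}\le (n+k)e^{2\sqrt{n(1+k/n)}}=U(n,k)$, since $\Phi(\alpha)=1+\alpha$ for $\alpha\le1$ by Lemma~\ref{L-ul-Phi}. So only $k>n$ needs the inductive step. The terms $U(n-i,k)$ with $n-i$ small have $k/(n-i)>1$, which is fine. The term $U(n,k-1)$ has ratio $\ge1$ or is covered by the base case.

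\textbf{First term.} Dividing by $e^{g(n,k)}$, we need
\[ (n+k) \GE (n+k-1)e^{g(n,k-1)-g(n,k)} + \frac{1}{n+k}\sum_{i=1}^n (n-i+k)\,e^{g(n-i,k)-g(n,k)}. \]
We expand $g$ using its first partial derivatives. Let $\alpha=k/n$ and write $\Phi,\Phi'$ at $\alpha$. Then $g_k=\Phi'/\sqrt{n\Phi}$ and $g_n=(\Phi-\alpha\Phi')/\sqrt{n\Phi}$. Both are $O(1/\sqrt n)$ in size.

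For the first term, concavity of $\Phi$ (Lemma~\ref{L-ul-Phi}) together with concavity of $\sqrt{\cdot}$ gives that $g$ is concave in $k$ along rows. Hence $g(n,k-1)-g(n,k)\le -g_k(n,k-1)$. Since $e^{-x}\le 1-x+x^2/2$, the first term is at most $(n+k-1)(1-g_k+O(1/n))$.

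\textbf{Sum term.} Here I would use $g(n-i,k)-g(n,k)\le -i\,g_n(n,k)$, valid if $g$ is convex in $n$ for fixed $k$. Otherwise I would settle for a second-order bound $-i g_n + O(i^2/n^{3/2})$. The sum is then bounded by a geometric series:
\[ \frac{1}{n+k}\sum_{i\ge1}(n+k)e^{-i g_n} \approx \frac{1}{n+k}\cdot\frac{n+k}{g_n}. \]
More precisely, the sum is at most $\frac{1}{g_n}\bigl(1+O(g_n)\bigr)$ after accounting for the factor $(n-i+k)/(n+k)\le1$.

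\textbf{Combining.} We then need, up to lower-order terms,
\[ (n+k)g_k \;\gtrsim\; \frac{1}{g_n} + 1 \quad\Longleftrightarrow\quad (n+k)\,g_k\,g_n \;\gtrsim\; 1. \]
Substituting the derivatives, $(n+k)\,\Phi'(\Phi-\alpha\Phi')/(n\Phi)=1$ is exactly the eikonal equation~\eqref{eq:Phi} rewritten: $(1+\alpha)\Phi'(\Phi-\alpha\Phi')=\Phi$. So the leading order balances exactly. The slack must come from the prefactor $(n+k)$, which produces the extra $+1$ needed on the left because $(n+k)-(n+k-1)=1$. The error terms are $O(1/\sqrt n)$ relative. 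The choice $\Phi_-$ ensures $g_n>0$, since $\Phi-\alpha\Phi'>0$ for the smaller root. This makes the geometric series converge.

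\textbf{Main obstacle.} The delicate step is making the second-order error terms rigorous and uniform in $\alpha>1$. These are the curvature of $g$ in $n$, the truncation of the geometric sum at $i=n$, and the cases where $n-i$ is small. They must all be dominated by the single unit of slack supplied by the polynomial prefactor. I would first try to prove exact convexity/concavity inequalities for $g$ from $\Phi''<0$, and bound $\Phi'$ and $\Phi-\alpha\Phi'$ using the two-sided estimates of Lemma~\ref{L-ul-Phi}. Failing that, I would handle small $n$ or small $g_n$ separately via Lemma~\ref{L:n+k}.
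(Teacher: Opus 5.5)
\textbf{There is a genuine gap.} Your skeleton is the paper's: induction on $n+k$ with the base case $k\le n$ from Lemma~\ref{L:n+k}; the first term via concavity in $k$ and $e^{-x}\le 1-x+x^2/2$; the sum via a tangent line and a geometric series of size $1/g_n$; cancellation by $(n+k)g_ng_k=1$; and a unit of slack from the prefactor $n+k$. But the two steps you defer as the ``main obstacle'' are where the proof actually lives, and both are mishandled.

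\textbf{The tangent-line step.} The bound $g(n-i,k)\le g(n,k)-i\,g_n(n,k)$ holds when $g$ is \emph{concave} in $n$, not convex, and you never establish it. It is true: $g_{nn}=\bigl(\alpha^2\Phi\Phi''-\tfrac12(\Phi-\alpha\Phi')^2\bigr)/(n\Phi)^{3/2}<0$, since $\Phi''<0$ for $\alpha>1$, and for $0<x\le n<k$ the ratio $k/x$ stays above $1$. With it, the sum is at most $1/(e^{g_n}-1)\le 1/g_n$ with no error term at all. Your fallback, a second-order bound with an $O(i^2/n^{3/2})$ error, is not harmless. Against the weights $e^{-ig_n}$ it is a relative error $O(n^{-1/2})$ on a quantity of size $1/g_n\asymp\sqrt{n/\Phi}$. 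That is an absolute error of order one, the same order as your entire slack, and the same goes for your unsigned ``$1+O(g_n)$''. Similarly, concavity in $k$ gives $g(n,k-1)-g(n,k)\le -g_k(n,k)$; the version with $g_k(n,k-1)$ is false for strictly concave $g$.

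\textbf{The error terms are not lower order.} Write $N=n+k$. Your displayed ``$(n+k)g_k\gtrsim 1/g_n+1$'' has the $1$ on the wrong side. The actual requirement is $1+(N-1)g_k-\frac{N-1}{2}g_k^2\ge 1/g_n=Ng_k$, i.e.\ $1-g_k-\frac{N-1}{2}g_k^2\ge 0$. The quadratic term is of order one (about $\tfrac12$ near $\alpha=1$). With only your estimate $g_k=O(n^{-1/2})$ it is bounded merely by $O(N/n)=O(1+\alpha)$, which no unit of slack can absorb. The missing ingredient is the sharp $\alpha$-dependent bound coming from the choice of the minus root:
\[ g_k = \frac{1}{2\alpha\sqrt n}\Bigl(\sqrt{\Phi}-\sqrt{\Phi-\tfrac{4\alpha}{1+\alpha}}\Bigr) \le \frac{1}{2\alpha\sqrt n}\sqrt{\tfrac{4\alpha}{1+\alpha}} = \frac{1}{\sqrt{\alpha N}} \le \frac{1}{\sqrt N}\;. \]
This uses $\sqrt x-\sqrt{x-c}\le\sqrt c$ for $x\ge c$ together with $\Phi\ge\frac{4\alpha}{1+\alpha}$. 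Now $1-x-\frac{N-1}{2}x^2$ is decreasing for $x\ge 0$ and equals $(\sqrt N-1)^2/(2N)>0$ at $x=N^{-1/2}$, which closes the induction. The crude lower bound $\Phi\ge\frac{4\alpha}{1+\alpha}$ already suffices here; the finer two-sided estimates of Lemma~\ref{L-ul-Phi} are not needed.
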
 

The proof of Theorem~\ref{T-upper} is by induction on~$n$ and~$k$. For $k\le n$ we already know that the inequality holds, even without the multiplicative $n+k$ factor. This is the basis of the induction. The $n+k$ factor simplifies certain steps of the proof, as we shall see below. 

Throughout the proof we let $\alpha=\frac{k}{n}$ and 
\begin{equation*}
    S(n,k) \EQ 2\sqrt{n\Phi\biggl(\frac{k}{n}\biggr)} \EQ 2\sqrt{n\Phi(\alpha)} \quad,\quad
U(n,k) \EQ (n+k)\,e^{S(n,k)} \;. 
\end{equation*}
In what follows, we use the abbreviations $\Phi=\Phi(\alpha)$, $\Phi'=\Phi'(\alpha)$ and $\Phi''=\Phi''(\alpha)$, where $\Phi'(\alpha)$ and~$\Phi''(\alpha)$ are the first and second derivatives of $\Phi(\alpha)$ with respect to $\alpha$.
We let 
\[ S_n=\frac{\partial S}{\partial n} \quad,\quad S_k=\frac{\partial S}{\partial k} \quad,\quad S_{nn}=\frac{\partial^2 S}{\partial n^2} \quad,\quad S_{kk}=\frac{\partial^2 S}{\partial k^2} \] be the corresponding partial derivatives of $S=S(n,k)$ with respect to~$n$ and~$k$. (We do not need~$\frac{\partial^2 S}{\partial n\partial k}$ in our analysis.) We always assume that they are evaluated at the point $(n,k)$. The following expressions for the partial derivatives are easily obtained using the chain rule of differentiation. The proofs are given in Appendix~\ref{A:partial}.

\begin{lemma}\label{L:SnSk}
With $S(n,k) = 2\sqrt{n\Phi(\frac{k}{n})}$ and $\alpha=\frac{k}{n}$, we have:
    \[ S_n \EQ \frac{\Phi - \alpha \Phi'}{\sqrt{n \Phi}} \quad,\quad S_k \EQ \frac{\Phi'}{\sqrt{n \Phi}} 
     \quad,\quad
          S_{nn} \EQ \frac{\alpha^2 \Phi \Phi'' - \frac{1}{2}(\Phi - \alpha\Phi')^2}{(n\Phi)^{3/2}} \quad,\quad
    S_{kk} \EQ \frac{\Phi \Phi'' - \frac{1}{2}(\Phi')^2}{(n\Phi)^{3/2}}  \;.
\]
\end{lemma}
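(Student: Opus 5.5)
The plan is a direct chain-rule computation. Write $S=2\sqrt{G}$, where $G(n,k)=n\Phi(\alpha)$ and $\alpha=k/n$. Since $\partial\alpha/\partial n=-k/n^2=-\alpha/n$ and $\partial\alpha/\partial k=1/n$, we get
\[ G_n \EQ \Phi-\alpha\Phi' \quad,\quad G_k \EQ \Phi' \;. \]
By the chain rule, $S_n=G_n/\sqrt{G}$ and $S_k=G_k/\sqrt{G}$. With $G=n\Phi$ these give exactly the stated first derivatives.

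For the second derivatives, we use
\[ S_{xx} \EQ \frac{G_{xx}}{\sqrt G}-\frac{G_x^2}{2G^{3/2}} \EQ \frac{G\,G_{xx}-\frac12 G_x^2}{G^{3/2}} \;, \]
applied with $x=n$ and with $x=k$. It then remains to compute $G_{nn}$ and $G_{kk}$.

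Differentiating $G_k=\Phi'(\alpha)$ once more in $k$ gives $G_{kk}=\Phi''/n$. Hence $G\,G_{kk}=\Phi\Phi''$, which gives the formula for $S_{kk}$.

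Differentiating $G_n=\Phi(\alpha)-\alpha\Phi'(\alpha)$ in $n$ gives
\[ G_{nn} \EQ (\Phi'-\Phi'-\alpha\Phi'')\cdot\Bigl(-\frac{\alpha}{n}\Bigr) \EQ \frac{\alpha^2\Phi''}{n} \;. \]
Hence $G\,G_{nn}=\alpha^2\Phi\Phi''$, and the formula for $S_{nn}$ follows.

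There is no real obstacle here. The only point needing care is the cancellation of the $\Phi'$ terms when differentiating $\Phi-\alpha\Phi'$, together with the sign coming from $\partial\alpha/\partial n$. One should also note that $\Phi$ is twice differentiable where these formulas are used. This holds because $\Phi$ is defined by the smooth ODE of Lemma~\ref{L-ul-Phi} with a nonvanishing discriminant for $\alpha>1$, and the formulas are only needed there.
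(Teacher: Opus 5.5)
Your computation is correct and is essentially the paper's proof: a direct chain-rule calculation using $\partial\alpha/\partial n=-\alpha/n$ and $\partial\alpha/\partial k=1/n$. The paper writes out only $S_n$ and $S_k$ and calls the second derivatives ``similar''; your $G=n\Phi$ bookkeeping, with $G_{kk}=\Phi''/n$ and $G_{nn}=\alpha^2\Phi''/n$, correctly supplies that omitted step, and your restriction to $\alpha>1$ is appropriate, since $\Phi$ is only $C^1$ at $\alpha=1$.
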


The following lemma is central for our analysis.
\begin{lemma}\label{L:PDE}
    The function $S(n,k)=2\sqrt{n\Phi(\frac{k}{n})}$ satisfies the following partial differential equation:
    \[ (n+k)S_nS_k \EQ 1\;.\]
\end{lemma}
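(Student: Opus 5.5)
This is a direct computation from Lemma~\ref{L:SnSk} together with the defining ODE~\eqref{eq:Phi}. By Lemma~\ref{L:SnSk},
\[ S_nS_k \EQ \frac{(\Phi-\alpha\Phi')\Phi'}{n\Phi} \;.\]
Since $n+k = n(1+\alpha)$, the factor $n$ cancels, and we get
\[ (n+k)S_nS_k \EQ \frac{(1+\alpha)(\Phi\Phi'-\alpha(\Phi')^2)}{\Phi} \;.\]
Equation~\eqref{eq:Phi} can be rewritten as $\Phi\Phi'-\alpha(\Phi')^2=\frac{\Phi}{1+\alpha}$. Substituting this, the right-hand side becomes exactly $1$.

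The only points needing care are legitimacy conditions. First, $\Phi>0$, so the division is valid. Lemma~\ref{L-ul-Phi} gives this: $\Phi(\alpha)=1+\alpha\ge 1$ on $[0,1]$, and $\Phi\ge 2$ for $\alpha>1$ by the lower bound (note $\s\ge0$ there, and the lower bound is at least $2$). Second, $\Phi$ is differentiable, so that Lemma~\ref{L:SnSk} applies. The only delicate point is $\alpha=1$, where the two branches meet and the discriminant vanishes.

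For $\alpha<1$ I would check directly: $\Phi=1+\alpha$ and $\Phi'=1$ give $\Phi\Phi'-\alpha(\Phi')^2=1$, which equals $\Phi/(1+\alpha)=1$. For $\alpha>1$, $\Phi=\Phi_-$ satisfies~\eqref{eq:Phi} by construction. At $\alpha=1$ both one-sided derivatives equal $\Phi/(2\alpha)=1$, so $\Phi$ is $C^1$ there and the identity holds by continuity.

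I expect no real obstacle. The main step is simply recognizing that the ODE~\eqref{eq:Phi} was designed exactly so that this PDE (the eikonal equation) holds.
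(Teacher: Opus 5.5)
Your proof is correct and is essentially the paper's own argument: substitute $S_n$ and $S_k$ from Lemma~\ref{L:SnSk}, use $n+k=n(1+\alpha)$, and apply \eqref{eq:Phi} in the form $\Phi\Phi'-\alpha(\Phi')^2=\Phi/(1+\alpha)$. Your extra checks ($\Phi>0$, and $\Phi'(1)=1$ from both sides) are correct, though the paper does not make them explicit since the lemma is only used for $\alpha>1$.
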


\begin{proof}
    Using Lemma~\ref{L:SnSk} and the differential equation \eqref{eq:Phi} defining $\Phi(\alpha)$ we get:
    \[
    (n+k)S_nS_k \EQ (1+\alpha)n \, \frac{\Phi - \alpha \Phi'}{\sqrt{n \Phi}}  \,\frac{\Phi'}{\sqrt{n \Phi}}
        \EQ (1+\alpha)n \,\,\frac{\frac{\Phi}{1+\alpha}}{n \Phi}\EQ 1 \;. \qedhere
    \]
\end{proof}

Finally, we also need the following technical lemma whose proof is given in Appendix~\ref{A:concave}.

\begin{lemma}\label{L-S-concave}
    The function $S(n,k)$ is strictly concave in~$n$ and in~$k$ when $0<n<k$. In other words, $S_{nn}<0$ and $S_{kk}<0$ when $0<n<k$.
\end{lemma}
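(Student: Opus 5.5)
The plan is to read off the signs directly from the explicit formulas of Lemma~\ref{L:SnSk}, using the properties of $\Phi$ established in Lemma~\ref{L-ul-Phi}. Fix $0<n<k$ and put $\alpha=\frac{k}{n}>1$. By Lemma~\ref{L-ul-Phi}, $\Phi=\Phi_-$ is well-defined and twice differentiable at such~$\alpha$, with $\Phi''(\alpha)<0$. Writing $\s=\ln\frac{1+\alpha}{2}>0$, its lower bound gives
\[ \Phi(\alpha)\GE 2+\s+\ln(\s+1)-\Bigl(\frac{\s}{\s+1}\Bigr)^3 \GT 1 \GT 0, \]
because $\bigl(\frac{\s}{\s+1}\bigr)^3<1$. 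Hence $n\Phi>0$ and the common denominators $(n\Phi)^{3/2}$ in Lemma~\ref{L:SnSk} are strictly positive. It therefore suffices to show that the two numerators are strictly negative.

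For $S_{kk}$ the numerator is $\Phi\Phi''-\frac12(\Phi')^2$. The first term is strictly negative, being the product of $\Phi>0$ and $\Phi''<0$. The second term, $-\frac12(\Phi')^2$, is non-positive since $\Phi'$ is real; this uses the well-definedness part of Lemma~\ref{L-ul-Phi}, i.e.\ that the discriminant in~\eqref{eq:Phi2} is non-negative. So the numerator is strictly negative and $S_{kk}<0$.

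For $S_{nn}$ the numerator is $\alpha^2\Phi\Phi''-\frac12(\Phi-\alpha\Phi')^2$. The first term is strictly negative because $\alpha^2>0$, $\Phi>0$ and $\Phi''<0$. The second term is again minus half a real square, hence non-positive. So $S_{nn}<0$ as well.

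There is essentially no computational obstacle here: the whole difficulty has been pushed into Lemma~\ref{L-ul-Phi}, namely the strict concavity $\Phi''<0$ for $\alpha>1$ and the positivity of~$\Phi$. The only point needing care is that Lemma~\ref{L-ul-Phi} is invoked strictly in the range $\alpha>1$. At $\alpha=1$ the discriminant in~\eqref{eq:Phi2} vanishes, so $\Phi_-$ need not be twice differentiable there. The hypothesis $0<n<k$ is exactly what keeps us in the range $\alpha>1$.
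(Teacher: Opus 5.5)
Your proof is correct and is essentially the paper's own argument in Appendix~\ref{A:concave}. With $\alpha=k/n>1$ and positive denominators $(n\Phi)^{3/2}$, the terms $\Phi\Phi''$ and $\alpha^2\Phi\Phi''$ are strictly negative since $\Phi>0$ and $\Phi''<0$, while $-\frac12(\Phi')^2$ and $-\frac12(\Phi-\alpha\Phi')^2$ are non-positive; your extra justification of $\Phi>0$ via the lower bound of Lemma~\ref{L-ul-Phi} (it also follows from $\Phi\ge\frac{4\alpha}{1+\alpha}$) and your remark about staying away from $\alpha=1$ are both sound.
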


With these preparations, we are now ready to prove Theorem~\ref{T-upper}.

\begin{proof}[Proof of Theorem~\ref{T-upper}]
Our goal is to prove that
\begin{equation}
    U(n,k) \GE U(n,k-1) + \frac{1}{n+k}\sum_{j=0}^{n-1}U(j,k) \quad,\quad 0<n<k \;.
    \label{eq:up}
\end{equation} 
Using induction on $n+k$ we then have
\begin{equation*}
    F(n,k) \EQ F(n,k-1) + \frac{1}{n+k}\sum_{j=0}^{n-1} F(j,k)
           \LE U(n,k-1) + \frac{1}{n+k}\sum_{j=0}^{n-1} U(j,k) \LE U(n,k) \;.
\end{equation*}

All that remains, therefore, is to prove inequality~\eqref{eq:up}, which is equivalent to the following inequality:
\begin{equation}
   \frac{U(n,k-1)}{U(n,k)} + \frac{1}{(n+k)\,U(n,k)}\sum_{j=0}^{n-1}U(j,k) \LE 1 \quad,\quad 0<n<k \;.
   \label{eq:Q}
\end{equation}

We begin by bounding the first term of \eqref{eq:Q}. Let $N=n+k$ and $S_k=\frac{\partial S}{\partial k}$. By the mean value theorem we have $S(n,k)-S(n,k-1)=S_k(n,y)$, for some $y\in[k-1,k]$. By Lemma~\ref{L-S-concave} the function $S(n,k)$ is concave in~$k$ and hence $S_k(n,y)$ is decreasing in~$y$. Hence $S(n,k)-S(n,k-1)\ge S_k(n,k)=S_k$, or equivalently $S(n,k-1)-S(n,k)\le -S_k$. Thus
\begin{equation}
    \frac{U(n,k-1)}{U(n,k)} \EQ  \frac{n+k-1}{n+k}\,e^{S(n,k-1)-S(n,k)} \LE (1-\frac{1}{N})\,e^{-S_k} \;. \label{eq:U1}
\end{equation}

We next bound the sum appearing in the second term of \eqref{eq:Q}. Let $S_n=\frac{\partial S}{\partial n}(n,k)$. By Lemma~\ref{L-S-concave} the function $S(n,k)$ is also concave in~$n$. Thus, the function $S(x,k)$ lies below its tangent at $x=n$, i.e.,  
\[ S(x,k) \LE S(n,k) + S_n\cdot(x-n) \;.\]
Thus,
\[ \frac{1}{n+k}\sum_{j=0}^{n-1} U(j,k) \LE \int_0^n e^{S(x,k)}\,dx \LE e^{S(n,k)} \int_0^n e^{S_n(x-n)\,dx} \EQ \frac{e^{S(n,k)}}{S_n}\left[e^{S_n(x-n)}\right]_0^n \LE \frac{e^{S(n,k)}}{S_n} \;. \]

It follows that
\begin{equation}
    \frac{1}{(n+k)\,U(n,k)}\sum_{j=0}^{n-1}U(j,k) \LE \frac{1}{Ne^{S(n,k)}}\frac{e^{S(n,k)}}{S_n} \EQ \frac{1}{NS_n} \EQ S_k \;,
    \label{eq:U2}
\end{equation}
where the last step uses Lemma~\ref{L:SnSk} that states that $(n+k)S_nS_k=1$.

Using inequalities \eqref{eq:U1} and~\eqref{eq:U2}, we get that to prove inequality~\eqref{eq:Q}, it is enough to prove that
\begin{equation}
    (1-\frac{1}{N})e^{-S_k} + S_k \LE 1 \label{eq:Sk}\;.
\end{equation}

It can be seen here that the $(1-\frac{1}{N})$ factor, which comes from the multiplicative $(n+k)$ factor in the definition of $U(n,k)$, is essential for our proof plan, as without it the inequality $e^{-S_k} + S_k\le 1$ does not hold, even for very small values of~$S_k$. With the help of the $(1-\frac{1}{N})$ term, we are able to show that \eqref{eq:Sk} holds for all values that $S_k$ can assume.

Using the upper bound $e^{-x}\le 1-x+\frac{x^2}{2}$, that holds for every $x\ge 0$, we get
\begin{equation*}
(1-\frac{1}{N})e^{-S_k} + S_k \LE (1-\frac{1}{N})(1-S_k+\frac{1}{2}S_k^2) + S_k 
\EQ (1-\frac{1}{N}) + \frac{S_k}{N} + \frac12(1-\frac{1}{N})S_k^2\;. 
\end{equation*}
Thus, to prove inequality~\eqref{eq:Sk} is it enough to show that 
\begin{equation}
-\frac{1}{N} + \frac{S_k}{N} + \frac12(1-\frac{1}{N})S_k^2 \LE 0\;. \label{eq:quad2}
\end{equation}
The left-hand side is a quadratic expression in~$S_k$ and it is increasing for $S_k\ge 0$. Evaluating this quadratic expression at $S_k=\frac{1}{\sqrt{N}}$ we get a value of $\frac{2\sqrt{N}-N-1}{2N^2}$ which is negative for every $N\ge 2$.
Thus, to establish inequality~\eqref{eq:quad2} and finish the proof it is enough to show that $S_k\le \frac{1}{\sqrt{N}}$.

From Lemma~\ref{L:SnSk} and the differential equation \eqref{eq:Phi2} we have:
\[ S_k \EQ \frac{\Phi'}{\sqrt{n\Phi}} \EQ \frac{\Phi-\sqrt{\Phi^2-\frac{4\alpha}{1+\alpha}\Phi}}{\sqrt{n\Phi}} 
\EQ \frac{1}{2\alpha\sqrt{n}}\left(\sqrt{\Phi}-\sqrt{\Phi-\frac{4\alpha}{1+\alpha}}\right) \;.\]
The function $\sqrt{x}-\sqrt{x-c}$ is decreasing in~$x$ for every $x\ge c$. Thus, $\sqrt{x}-\sqrt{x-c}\le\sqrt{c}$ for every $x\ge c$. As $\Phi\ge \frac{4\alpha}{1+\alpha}$, we get that
\[ S_k \LE \frac{1}{2\alpha\sqrt{n}} \sqrt{\frac{4\alpha}{1+\alpha}} \EQ \frac{1}{\sqrt{\alpha(1+\alpha)n}} \EQ \frac{1}{\sqrt{\alpha N}} \LE \frac{1}{\sqrt{N}} \;,\]
since $\alpha\ge 1$. This completes the proof of the Theorem.
\end{proof}

\section{Concluding results}\label{sec:concl}

We obtained an improved and streamlined analysis of the \RAR\ algorithm for solving 2-player Turn-Based Stochastic Games (2-TBSGs) and for finding sinks of non-binary AUSOs. The analysis uses for the first time the natural hypercube structure of the games that was previously ignored in the non-binary case. This leads to a nicer recurrence relation closely tied to very simple and natural partial differential equations. We use a discrete version of the WKB technique to obtain a tight upper bound on the solution of the recurrence relation. As a result, the upper bound on the complexity of the \RAR\ algorithm is improved from $e^{O\bigl(\sqrt{n\ln\frac{m}{\sqrt n}}\,\bigr)}$ to $e^{O\bigl(\sqrt{n\ln\frac{m}{n}}\,\bigr)}$. The improvement is especially significant when $m=O(n)$, as then the upper bound is improved from $e^{O(\sqrt{n\ln n})}$ to $e^{O(\sqrt{n})}$, implying a smooth transition between binary and non-binary games.

Although the overall improvement obtained is fairly small, we think that it is significant given the importance of 2-TBSGs and given that we believe that it gives the true asymptotic behavior of the algorithm. As mentioned, the \RAR\ algorithm is the fastest known general algorithm for solving such games.

Major open problems remain of course. The most important open problem is probably obtaining even faster algorithms for 2-TBSGs or for some interesting subfamilies of games such as Mean Payoff Games (MPGs), as was done about a decade ago for Parity Games (PGs). Another interesting open problem is obtaining deterministic subexponential time algorithms for solving 2-TBSGs. The fastest known determinitic algorithms currently have an exponential complexity even for binary games.

\subsection*{AI Disclosure} We used Google's Gemini and OpenAI's ChatGPT to assist with the solution of the recurrence relation obtained for $F(n,k)$. The tool materially affected Section~\ref{S:Recurrence} and Appendices~\ref{app:U} to~\ref{A:concave}. The authors verified the correctness and originality of all content including references.

\newpage
\clearpage
\appendix
\addtolength{\headsep}{0.75cm}   
\addtolength{\textheight}{-1.5cm} 
\pagestyle{appendix}

\section{Context and significance}\label{A:context}

The \RAR\ algorithm is an adaptation of a dual randomized simplex algorithm of Matou{\v{s}}ek, Sharir and Welzl \cite{MaShWe96}. Kalai \cite{Kalai92,Kalai92b} independently developed a primal randomized simplex algorithm that is dual version of the algorithm of Matou{\v{s}}ek, Sharir and Welzl \cite{MaShWe96}. The relation between the two algorithms was noted by Goldwasser \cite{Goldwasser95}. 

Matou{\v{s}}ek, Sharir and Welzl \cite{MaShWe96} obtain an $e^{O(\sqrt{n\ln(m/\sqrt n)}\,)}$ upper bound on the running time of their dual simplex algorithm, where~$n$ is the number of variables of the input linear program and~$m$ is the number of inequality constraints. Their analysis is fairly complicated and employs generating functions techniques. For more on their analysis, see G{\"{a}}rtner \cite{Gartner95PhD}.

Kalai \cite{Kalai92,Kalai92b}, G{\"{a}}rtner \cite{Gartner95} and Hansen and Zwick \cite{HaZw15} obtained versions of the primal Random Facet algorithm with an improved upper bound of $e^{O(\sqrt{n\ln(m/\sqrt n)}\,)}$. These algorithms can also be adapted to games and AUSOs and as mentioned provide an alternative way of obtaining an $e^{O(\sqrt n)}$ bound for games with $m=O(n)$, but using much more complicated algorithms and with worse constant factors in the exponent.

It is a major open problem whether there is a polynomial version of the simplex algorithm or, more generally, a strongly polynomial time algorithm for solving linear programming problems.

2-player, 0-sum, Turn-Based Stochastic Games (2-TBSG) are an intriguing class of games that form a generalization of many infinite duration games such as Simple Stochastic Games (SSGs) \cite{Condon92,AnMi09}, Mean Payoff Games (MPGs) \cite{EhMy79,GuKaKh88,ZwPa96}, Energy Games (EGs) \cite{chakrabarti2003resource,bouyer2008infinite} and Parity Games (PGs) \cite{EmJu91,emerson1993model,mcnaughton1993infinite}. For a thorough treatment of all these and other games, see Fijalkow et al.~\cite{gamesbook}. The problems of solving these games belong to $\text{NP}\cap \text{co-NP}$, yet it is a major open problem whether any of these games can be solved in polynomial time. This is one of the simplest and most natural combinatorial problem with this status.

Parity Games (PGs) can be solved in $n^{O(\log n)}$ time (see \cite{CaludeJKLS22} and \cite{CzerwinskiDFJLP19,FearnleyJKSSW19,GimbertI17,JurdzinskiL17,Lehtinen2018modal,LehtinenPSW22}), but these improved bounds do not seem to extend even to MPGs \cite{FijalkowGO20,ColcombetFGO22}. Mean Payoff Games (MPGs) and Energy Games (EGs) can be solved in pseudo-polynomial time \cite{ZwPa96,brim2011faster,CominR17}. Discounted versions of all these games can be solved in strongly polynomial time for any fixed discount factor  \cite{Ye11a,HaMiZw13}. (In a discounted game, the reward obtained in the $i$-th step is multiplied by $\lambda^i$.)

1-player stochastic games, also known as Markov Decision Processes (MDPs) \cite{bellman1957markovian,Howard60,Derman72,Puterman94}, can be solved in polynomial time using linear programming algorithms. Discounted MDPs with a fixed discount factor $\lambda<1$ can be solved in strongly time using either a simplex algorithm \cite{Ye11a,Scherrer16}, or an interior point algorithm \cite{Ye05}. It is a major open problem whether non-discounted MDPs can be solved in strongly polynomial time. This is one of the simplest and most natural families of linear programs for which this problem is open.

Concurrent 2-player Stochastic Games, introduced by Shapley \cite{Shapley53}, form a huge generalization of 2-TBSG. For a thorough mathematical treatment of such games see Filar and Vrieze \cite{FiVr12} or Solan \cite{solan2022course}. Concurrent Stochastic games of 2 or more players are also known as \emph{Markov Games}. This is a huge area of research that spans Reinforcement Learning (RL) \cite{sutton2018reinforcement} and Algorithm Game Theory.

The fastest known algorithm for the non-discounted 2-TBSGs, or discounted 2-TBSGs with a discount factor~$\lambda$ very close to~$1$, is the \RAR\ algorithm considered in this paper. We obtain a tighter analysis of this algorithm that goes for the first time beyond the analysis of Matou{\v{s}}ek, Sharir and Welzl \cite{MaShWe96} that applies in a more general setting. This analysis also provides a uniform treatment of all games, providing a smooth transition between binary and non-binary game. We also think that the application of the discrete WKB technique to problems arising in the analysis of algorithms may be interesting in its own right.

2-TBSG are interesting not only from the computational complexity perspective. They are motivated by various problems in automata, logic and automatic program verification and synthesis. For more details, see Gr{\"a}del, Thomas, and Wilke \cite{GrThWi02}, Baier and Katoen \cite{baier2008principles}, and Apt and Gr{\"a}del \cite{ApGr11} and the huge number of references contained in them.

Akian, Gaubert and Guterman \cite{AkGaGu12} obtained an equivalence between tropical polyhedral feasibility problems and MPGs. Allamigeon, Gaubert and Skomra \cite{AGS18} showed that solving generic nonarchimedean semidefinite programs are equivalent to solving Stochastic Games. Allamigeon, Benchimol, Gaubert and Joswig \cite{ABGJ14} show that combinatorial simplex algorithms can solve MPGs, showing another intriguing connection to Linear Programming algorithms.

\section{Acyclic Unique Sink Orientations (AUSOs)}\label{A:AUSOs}

We begin with standard definitions of hypercubes, oriented hypercubes, sub-hypercubes, and sinks of oriented hypercubes.

\begin{definition}[Hypercubes]
Let $A_1,A_2,\ldots,A_n$ be finite sets. The \emph{hypercube}, or just \emph{cube}, $H(\pra)=(V,E)$
is a graph whose vertex set is $V=\pra$ and its edge set is $E=\{ \{\ba,\bb\} \mid \ba,\bb\in V \text{ and } d_H(\ba,\bb)=1\}$, where $d_H(\ba,\bb)=|\{i\in[n]\mid \ba_i\ne\bb_i\}|$ is the Hamming distance between~$\ba$ and~$\bb$. In other words, two vertices $\ba,\bb\in\pra$ are adjacent if and only if they differ in exactly one coordinate. For brevity, we often write $\{\ba,\bb\}\in H(\pra)$ instead of $\{\ba,\bb\}\in E$.
\end{definition}

\begin{definition}[Oriented cubes]
   An \emph{oriented cube} $\vec{H}$ on $\pra$ is obtained by assigning each edge of $H(\pra)$ a direction. More formally, $\vec{H}$ is a directed graph $(V,\vec{E})$, where $V=\pra$ and $\vec{E}$ is such that for every $\{\ba,\bb\}\in H(\pra)$, either $(\ba,\bb)\in \vec{E}$ or $(\bb,\ba)\in \vec{E}$, but not both. We often write $(\ba\to\bb)\in \vec{H}$ instead of $(\ba,\bb)\in \vec{E}$ to stress that the edge is viewed as directed from~$\ba$ to~$\bb$. We let $\oH$ be the set of all oriented cubes on $\pra$.
\end{definition}

We refer to the indices $i\in[n]$ as \emph{coordinates}, and to a set~$A_i$ as the set of \emph{choices} available in the $i$-th coordinate. Due to the relation of AUSOs to the games that we introduce in the next section, we also refer to the indices $i\in [n]$ as \emph{states}, and to a set~$A_i$ as the set of \emph{actions} available at state~$i$. We use these terms interchangeably. We also refer to $\ba\in\pra$ as a \emph{vertex}, or as a \emph{strategy}, in the terminology of games.

\begin{definition}[Acyclicity]
    An oriented cube $\vec{H}\in\oH$ is \emph{acyclic} if it contains no directed cycles.
\end{definition}

\begin{definition}[Subcubes]
    If $\vec{H}\in\oH$ and $A'_1\subseteq A_1,\dots,A'_n\subseteq A_n$, then the \emph{subcube} $\vec{H}[\prod_{i=1}^n A'_i]$ is the oriented cube on $\prod_{i=1}^n A'_i$ induced by~$\vec{H}$.
\end{definition}

\begin{definition}[Sinks]
    Let $\vec{H}\in\oH$. A vertex $\ba$ of $\vec{H}$ is said to be a \emph{sink} of~$\vH$ if and only if all edges incident to~$\ba$ are directed towards~$\ba$.
\end{definition}

We next define the central notion of Acyclic Unique Sink Orientations (AUSOs) of hypercubes. Our definition generalizes the definition of binary AUSOs given by Szab{\'o} and Welzl \cite{SzWe01}.

\begin{definition}[Acyclic Unique Sink Orientations (AUSO)]
    An oriented cube $\vH\in\oH$ is an \emph{Acyclic Unique Sink Orientation (AUSO)} of $\pra$ if and only if $\vH$ is \emph{acyclic} and every subcube of $\vH$, including itself, has a \emph{unique} sink. If $\vH$ is an AUSO, we let $\sink(\vH)$ denote its unique sink. We also let $\cH$ be the set of all AUSOs on $\pra$.
\end{definition}

Every injective function $f:\prod_{i=1}^n A_i\to \mathbb{Z}$ gives rise to an acyclic orientation $\vH_f$ of $\pra$ in the following way:  $(\ba\to\bb)\in \vH_f$ if and only if $f(\ba)<f(\bb)$. Every sink of $\vH_f$ corresponds to a \emph{local maximum} of~$f$. If~$f$ has no local maximum other than the \emph{global maximum}, then $\vH$ has a unique sink. However, $\vH$ is not necessarily an AUSO. $\vH$ is an AUSO if and only if $f$ satisfies the stronger property that every subcube of $H(\pra)$ has no local maximum other than the global maximum on this subcube.

Conversely, if $\vH$ is acyclic, then there exists an injective function $f:\prod_{i=1}^n A_i\to \mathbb{Z}$ such that $(\ba\to\bb)\in \vH_f$ if and only if $f(\ba)<f(\bb)$; simply take a topological ordering of~$\vH$. In the analysis of certain sink-finding algorithms it is sometimes convenient to refer to such a function~$f$. The algorithms themselves have no access to~$f$. (The question whether having access to such a function~$f$ can help find the unique sink is an interesting open problem.) 

A sink-finding algorithm is given \emph{oracle} access to an input AUSO $\vH$. The algorithm can query a vertex $\ba\in\vH$ and receive the direction of all the edges incident on~$\ba$. The complexity of the algorithm is the (expected) number of queries performed by the algorithm before the sink of $\vH$ is identified. It is a major open problem whether there sink-finding algorithm that performs a polynomial, in~$n$, number of queries. The fastest known sink-finding algorithm is the \RAR\ algorithm analyzed in this paper.

Unique Sink Orientations (USOs), also defined by Szab{\'o} and Welzl \cite{SzWe01}, are orientations that have a unique sink in every subcube, but are not required to be acyclic. Finding the sinks of USOs seems to be a harder problem than finding the sinks of AUSOs. In particular, the \RAR\ algorithm cannot be used to find sinks of USOs. G{\"a}rtner and Schurr \cite{GaSc06} showed that a polynomial time algorithm for finding the sink of USOs would yield a strongly polynomial time algorithm for linear programming.

\section{2-player, 0-sum, Turn-Based, Stochastic Games (2-TBSGs)}\label{A:Games}

In this section we give a brief introduction to 2-player, 0-sum, Turn-Based, Stochastic Games (2-TBSG) \cite{HaMiZw13}. Simple Stochastic Games (SSGs) \cite{Condon92,AnMi09} are special cases of such games. Mean Payoff Games (MPGs) \cite{EhMy79,GuKaKh88,ZwPa96} are essentially deterministic versions of 2-TBSGs and are thus also included as special cases. A classical example of a 2-TBSG is \emph{backgammon}, through the set of states is of exponential size. Similarly, a classical example of a deterministic 2-player turn-based game is \emph{chess}, again with an exponential set of states.

As mentioned, the \RAR\ algorithm is the fastest known general algorithm for solving these families of games, unless the games are discounted or the rewards have a special structure. For an in-depth study of these, and other games, see \cite{gamesbook}. In addition to defining 2-TBSGs, we also describe a family of algorithms, known as the \emph{Strategy Iteration} algorithms. The \RAR\ algorithm belongs to this family.

\subsection{Definitions}\label{sub:game-def}

\begin{definition}[2-TBSG]
    A 2-player 0-sum Turn-Based Stochastic Game (2-TBSG) is a tuple $G=(S_0,S_1,A,r,s_0)$. Here~$S_0$ and~$S_1$ are disjoint finite sets of \emph{states}, where the states of~$S_k$, for $k\in\{0,1\}$ are controlled by player~$k$. Player~$0$ is the \emph{maximizer} while player~$1$ is the \emph{minimizer}. We let $S=S_0\cup S_1$. $A$ is a finite set of \emph{actions}. Each action $a\in A$ is a pair $(s,p)$, where $s\in S$ and $p\in\Delta(S)$ is a probability distribution over $S$. We usually assume that $S=[n]=\{1,2,\ldots,n\}$ and let $A_i=\{(i,p) \mid (i,p)\in A\}$, for $i\in[n]$, be the set of actions associated with state $i\in[n]=S$. Note that the sets $A_1,\ldots,A_n$ are disjoint and that $\cup_{i=1}^n A_i=A$.
$r:A\to\RR$ is a function that assigns a \emph{reward} of each action. Finally, $s_0\in S$ is the \emph{initial} state.
\end{definition}

\begin{definition}[Plays]
The two players jointly construct an infinite sequence $(s_0,a_0),(s_1,a_1),\ldots$ of state-action pairs known as a \emph{play}, or \emph{trajectory}, from the initial state~$s_0$ as follows. If $s_j\in S_k$, where $k\in\{0,1\}$, then player~$k$ chooses an action $a_j=(s_j,p_j)\in A$. (Note that~$a_j$ must be an action of from~$s_j$. We assume that each state has at least one action belonging to it.) The next state~$s_{j+1}$ is then chosen according to the probability distribution $p_j\in\Delta(S)$. In general, the sequence $(s_0,a_0),(s_1,a_1),\ldots$ is stochastic, i.e., $s_j$ and $a_j$ for $j>0$ are random variables even if the two players are using deterministic strategies, as defined below. 
\end{definition}

\begin{definition}[Deterministic games]
    A game is said to be \emph{deterministic}, or \emph{non-stochastic}, if for each action $(s,p)\in A$, there is a unique $s'\in S$ such that $p(s')=1$. The action $a=(s,p)$ is then equivalent to a directed edge $s\to s'$ and the game is then simply a directed graph. In both stochastic and non-stochastic games, a state~$s$ is said to be a \emph{stopping} state, if there is only one action $a=(s,p)$ associated with~$s$ and $p(s)=1$.
\end{definition}

\begin{definition}[Sequences of rewards]
 The state-action sequence $(s_0,a_0),(s_1,a_1),\ldots$ gives rise to an infinite sequence of rewards $r_j=r(a_j)$, for $j\ge 0$. As the game is 0-sum, the reward~$r_j$ received by player~$0$ is also the \emph{cost} incurred by player~$1$. Player~$0$, the maximizer, wants to maximize the rewards she receives, while player~$1$, the minimizer, wants to minimize the costs he incurs.
\end{definition}

\begin{definition}[Aggregators]
    To finish the definition of a game, we need to specify how the infinite sequence of rewards is converted into a single number that gives the \emph{outcome} of the game. This is done by a function $\agg:\RR\omega\to \RR$ called an \emph{aggregator}.
    \emph{Discounted} games use the aggregator $\dsc_\lambda(r_0,r_1,\ldots) = (1-\lambda)\sum_{j=0}^\infty \lambda^j r_j$, where $0<\lambda<1$ is the \emph{discount factor}. \emph{Non-discounted} game use the \emph{limiting average} aggregator $\underline{\avg}(r_0,r_1,\ldots) = \liminf_{N\to\infty} \frac{1}{N}\sum_{j=0}^{N-1}r_j$. (Alternatively, the aggregator $\overline{\avg}(r_0,r_1,\ldots) = \limsup_{N\to\infty} \frac{1}{N}\sum_{j=0}^{N-1}r_j$ can be used. In most cases, this makes no difference. Note that $\avg(r_0,r_1,\ldots) = \lim_{N\to\infty} \frac{1}{N}\sum_{j=0}^{N-1}r_j$ is not valid aggregator, as the limit may not exist.)
\end{definition}

\begin{definition}[Strategies]
    A \emph{deterministic} strategy of player~$k$, for $k\in\{0,1\}$, is a function that maps every realizable \emph{history} $h_j=(s_0,a_0),(s_1,a_1),\ldots,(s_{j-1},a_{j-1}),s_j$, where $s_j\in V_k$, into an action $(s_j,p)\in A$. A \emph{randomized} strategy for player~$k$ is a probability distribution over deterministic strategies. A strategy is said to be \emph{memoryless} if it depends on the history~$h_j$ only through the current state~$s_j$. Finally, a \emph{positional} strategy is a deterministic memoryless strategy.
\end{definition}

\begin{definition}[Outcomes of strategy pairs]
    A pair of strategies $\sigma_0$ and $\sigma_1$ of players~$0$ and~$1$, respectively, also known as a \emph{strategy profile}, gives rise to a fixed distribution of plays starting from~$s$. We let $out^{\sigma_0,\sigma_1}(s)=out^{\sigma_0,\sigma_1}_\agg(s)$ be the \emph{expected} outcome of plays from~$s$ according to this distribution with respect to~$\agg$.
\end{definition}

\begin{definition}[Values of strategies]
    If $\sigma_0$ and $\sigma_1$ are strategies of players~$0$ and~$1$, respectively, the values guaranteed by strategies $\sigma_0$ and $\sigma_1$, respectively, from state~$s$ are defined as follows:
    \[ val^{\sigma_0,\cdot}(s) \EQ \inf_{\tau_1} \, out^{\sigma_0,\tau_1}(s) \;,\]
    \[ val^{\cdot,\sigma_1}(s) \EQ \sup_{\tau_0} \, out^{\tau_0,\sigma_1}(s) \;,\]
where $\tau_0$ and $\tau_1$ are arbitrary strategies of players~$0$ and~$1$, respectively.
\end{definition}

\begin{definition}[Lower and upper values]
    The \emph{lower} and \emph{upper} values of a game are defined as:
    \begin{gather*}
        \underline{val}(s) \EQ \sup_{\sigma_0} \, val^{\sigma_0,\cdot}(s) \EQ \sup_{\sigma_0} \inf_{\sigma_1} \, out^{\sigma_0,\sigma_1}(s) \;, \\
        \overline{val}(s) \EQ \inf_{\sigma_1} \, val^{\cdot,\sigma_1}(s) \EQ \inf_{\sigma_1} \sup_{\sigma_0} \, out^{\sigma_0,\sigma_1}(s) \;.
    \end{gather*}
\end{definition}
Note that $\underline{val}(s)$ and $\overline{val}(s)$ are always well defined and $\underline{val}(s)\le \overline{val}(s)$.

A non-trivial theorem \cite{Shapley53,Everett1957,Gillette1957,liggett1969stochastic,mertens1981stochastic} says that discounted and non-discounted 2-TBSGs are \emph{determined}, i.e., we always have $\underline{val}(s)= \overline{val}(s)$. The common value, denoted $val(s)$, the called the \emph{value} of the game starting from~$s$. Furthermore, both players have positional strategies that are optimal starting from all states. For more details see \cite{gamesbook}.

\subsection{Strategy Iteration}\label{sub:strategy-iteration}

Strategy Iteration (SI), or Strategy Improvement, is a natural family of algorithms that can be used to solve the games discussed in the previous section. Algorithms in the family construct a sequence of positional strategies for player~$0$, or equivalently for player~$1$. Each strategy in the sequence is obtained by applying one or more local improvement steps to the previous strategy. The nature of the games ensures that the algorithm cannot be stuck at a local optimum. As the number of positional strategies is finite, the algorithm must terminate with an optimal strategy for player~$0$.

Strategy Iteration algorithms form a generalization of the \emph{Policy Iteration} family of algorithms \cite{Howard60,Derman72,Puterman94} used to solve 1-player games such as \emph{Markov Decision Processes (MDPs)}. It may also be viewed as a generalization of the simplex algorithm used to solve linear programming (LP) problems.

Suppose that the set of states of the game is $S=S_0\cup S_1=[n]$ and that the set of states controlled by player~$0$, the maximizer, is $S_0=[n_0]$. A positional strategy for the maximizer is a tuple $\sigma=(a_1,a_2,\ldots,a_{n_0})\in \praa$. We let $val^\sigma(s)=val^{\sigma,\cdot}(s)$ be the expected outcome of the game that start from state~$s$ when the maximizer uses $\sigma$ and the minimizer uses a \emph{best response} $\tau=\tau(\sigma)$, i.e., an optimal counter strategy for the minimizer when the maximizer uses~$\sigma$. We then let $val^{\sigma}=(val^{\sigma}(1),\ldots,val^{\sigma}(n))\in\RR^n$. For every strategy~$\sigma$, $val^\sigma$ can be computed in polynomial time as this amounts to solving a 1-player game. 

For two vectors $\bu,\bv\in\RR^n$ we say that $\bu\preceq\bv$  if and only if $u_i\le v_i$, for $i\in[n]$. We say that $\bu\prec\bv$ if and only if $\bu\preceq\bv$ and $u_i<v_i$ for at least one $i\in[n]$.

\begin{definition}[Switches]
  If $\sigma=(a_1,a_2,\ldots,a_{n_0})\in \praa$ is a positional strategy of player~$0$ and $b\in A_i\sm \{a_i\}$, we let $\sigma[b]=(a_1,\ldots,a_{i-1},b,a_{i+1},\ldots,a_n)$ be the strategy obtained by \emph{switching} the action at~$i$ from $a_i$ to~$b$. We say that $b$ is an \emph{improving switch} if $val^{\sigma}\prec val^{\sigma[b]}$.
\end{definition}

It is known that for every $\sigma$ and $b$, either $val^{\sigma}\preceq val^{\sigma[b]}$ or $val^{\sigma[b]}\preceq val^{\sigma}$. For discounted 2-TBSGs, under some natural non-degeneracy assumption, either $val^{\sigma}\prec val^{\sigma[b]}$ or $val^{\sigma[b]}\prec val^{\sigma}$. Furthermore, it is known that if $\sigma$ is not an optimal strategy for the maximizer, then there is at least one action~$b$ such that $val^{\sigma}\prec val^{\sigma[b]}$. For non-discounted 2-TBSGs, things are a bit more complicated. There may not be an action~$b$ for which $val^{\sigma}\prec val^{\sigma[b]}$ even if~$\sigma$ is not an optimal strategy.

To deal with non-discounted 2-TBSGs, we can define an appropriate \emph{extended valuation} $\VAL^{\sigma}$ for which the above properties do hold. (Each component of $\VAL^{\sigma}$ is usually a pair composed of $val^{\sigma}(i)$ and a \emph{potential}, or \emph{bias}, and pairs are ordered lexicographically. See, e.g., \cite{Puterman94}.) For discounted games we can simply take $\VAL^\sigma=val^\sigma$. Also, for any strategy~$\sigma$, $\VAL^\sigma$ can be computed in polynomial time.

Even a stronger property holds. If $b_1,b_2,\ldots,b_k$ are all improving switches with respect to~$\sigma$ belonging to different states, i.e., $\VAL^{\sigma}\prec \VAL^{\sigma[b_i]}$, for $i\in[k]$, then we also have $\VAL^{\sigma}\prec \VAL^{\sigma[b_1,\ldots,b_k]}$, where $\sigma[b_1,\ldots,b_k]$ is the strategy obtained from~$\sigma$ by applying all the switches simultaneously. (Note however, that there are cases in which $\VAL^\sigma \prec \VAL^{\sigma[b_1,\ldots,b_k]}\prec \VAL^{\sigma[b_i]}$ for every $i\in [k]$.) 

This suggests a natural class of \emph{strategy iteration} algorithms for finding an optimal strategy for the maximizer. Start from any initial positional strategy~$\sigma$. If there are no improving switches with respect to~$\sigma$, then~$\sigma$ is optimal. Otherwise, find some improving switches $b_1,\ldots,b_k$ belonging to different states and move from $\sigma$ to $\sigma[b_1,\ldots,b_k]$. Note that this is a family of algorithms and not a concrete algorithms as in many cases the algorithm can choose which set of improving switches to choose.

The most natural strategy iteration algorithm is perhaps Howard's algorithm \cite{Howard60} which finds the best improving switch from each state, if any, and performs all these best improving switches simultaneously. This is a polynomial time algorithm for 2-TBSGs with a fixed discount factor \cite{Ye11a,HaMiZw13,Scherrer16}, but it may require exponential time even for undiscounted MDP, as shown by Fearnley \cite{Fearnley10} following a breakthrough exponential lower bound for Parity Games (PGs) obtained by Friedman \cite{Friedmann11}. It should be noted, however, that Howard's algorithm seems to work very well in practice, similar to the good behavior exhibited by the simplex algorithm in practice. 

The \RAR\ algorithm is a randomized strategy iteration algorithm that performs only one improving switch at a time. Intriguingly, the worst-case expected complexity of the algorithm is subexponential, far superior to all known deterministic algorithms, including those that perform many improving switches at once. 

For more on strategy iteration algorithms for games, see \cite{AkianCDG13,AugerMS21}.

\section{Pseudocode of the \RAR\ algorithm}\label{A:RAR}

Pseudocode of the \RAR\ algorithm, referred to for brevity as \RARalg, is given in Figure~\ref{F:RAR-alg}.

\begin{figure}[p]
\begin{algorithm}[H]
\DontPrintSemicolon
\SetKwProg{Fn}{Algorithm}{:}{}
\SetKwInOut{Input}{Input}
\SetKwInOut{Output}{Output}

\Fn{$\RARalg(\vec{H}\in\vec{\mathcal{H}}(\pra),\,\ba\in\pra$)}{
  \BlankLine
  \Input{An AUSO $\vec{H}$ and a start vertex $\ba\in\vec{H}$}
  \Output{The unique sink of $\vec{H}$}
  \BlankLine
  $B \gets (\bigcup_{i=1}^n A_i)\setminus \ba$\;
  \If{$B=\emptyset$}{
    \Return $\ba$\;
  }
  $r\gets \algcall{Rand}(B)$\;
  $\bb \gets \RARalg(\vec{H}\setminus r,\,\ba)$\;
  $\bc \gets \bb[r]$\;

  \If{$(\bc \to \bb)\in \vec{H}$}{
    \Return $\bb$\;
  }
  \Else{
    \Return $\RARalg(\vec{H},\,\bc)$\;
  }
}
\end{algorithm}
\caption{Pseudocode of the \RAR\ algorithm.}\label{F:RAR-alg}
\end{figure}

\begin{figure}[p]
\[
\renewcommand{\arraystretch}{1.3}
\begin{array}{c|*{9}{c}}
 & 0 & 1 & 2 & 3 & 4 & 5 & 6 & 7 & 8 \\
\hline
0 & 1 & 1 & 1 & 1 & 1 & 1 & 1 & 1 & 1 \\
1 & 2 & \frac{5}{2} & \frac{17}{6} & \frac{37}{12} & \frac{197}{60} & \frac{69}{20} & \frac{503}{140} & \frac{1041}{280} & \frac{9649}{2520} \\
2 & \frac{7}{2} & \frac{14}{3} & \frac{45}{8} & \frac{773}{120} & \frac{322}{45} & \frac{9817}{1260} & \frac{84323}{10080} & \frac{4267}{480} & \frac{472373}{50400} \\
3 & \frac{17}{3} & \frac{185}{24} & \frac{48}{5} & \frac{545}{48} & \frac{65461}{5040} & \frac{24391}{1680} & \frac{206819}{12960} & \frac{15711797}{907200} & \frac{185713321}{9979200} \\
4 & \frac{209}{24} & \frac{713}{60} & \frac{10843}{720} & \frac{45827}{2520} & \frac{24467}{1152} & \frac{1255151}{51840} & \frac{49176893}{1814400} & \frac{597058757}{19958400} & \frac{3909784957}{119750400}
\end{array}
\]
\vspace*{1cm}
\[
\renewcommand{\arraystretch}{1.1}
\begin{array}{c|*{9}{c}}
 & 0 & 1 & 2 & 3 & 4 & 5 & 6 & 7 & 8 \\
\hline
0 & 1.0000 & 1.0000 & 1.0000 & 1.0000 & 1.0000 & 1.0000 & 1.0000 & 1.0000 & 1.0000 \\
1 & 2.0000 & 2.5000 & 2.8333 & 3.0833 & 3.2833 & 3.4500 & 3.5929 & 3.7179 & 3.8290 \\
2 & 3.5000 & 4.6667 & 5.6250 & 6.4417 & 7.1556 & 7.7913 & 8.3654 & 8.8896 & 9.3725 \\
3 & 5.6667 & 7.7083 & 9.6000 & 11.3542 & 12.9883 & 14.5185 & 15.9583 & 17.3190 & 18.6100 \\
4 & 8.7083 & 11.8833 & 15.0597 & 18.1853 & 21.2387 & 24.2120 & 27.1037 & 29.9152 & 32.6495
\end{array}
\]

\caption{The values of $F(n,k)$ for $0\le n\le 4$ and $0\le k\le 8$ in exact and approximate forms.}
    \label{fig:F-values}
\end{figure}

\section{Some values of the function $F(n,k)$}\label{A:F}

Some values of the function $F(n,k)$ are given in Figure~\ref{fig:F-values} of Appendix~\ref{A:F}.

\section{Equivalent formulation of the recurrence relation}\label{A:equiv}

Although not strictly necessary for our analysis, it is instructive to show that the recurrence relation of Definition~\ref{D:F(n,k)} can be expressed very succinctly in terms of the discrete difference operators~$\Delta_n$ and~$\Delta_k$ defined as follows:
\begin{align*}
    \Delta_n F(n,k) &\EQ F(n,k) - F(n-1,k)\;, \\
    \Delta_k F(n,k) &\EQ F(n,k) - F(n,k-1)\;.
\end{align*}
Equation~\eqref{F(n,k)-1} of Definition~\ref{D:F(n,k)} is then equivalent to:
\begin{equation}
  (n+k)\,\Delta_k F(n,k) \EQ \sum_{i=1}^n F(n-i,k) \;. \label{eq:diff}
\end{equation}

By applying $\Delta_k$ to this equality we obtain:
\begin{lemma}\label{L:diff}
    The function $F(n,k)$ from Definition~\ref{D:F(n,k)} satisfies the following difference relation
    \[ (n+k)\,\Delta_n\Delta_k F(n,k) \EQ F(n-1,k-1) \;.\]
\end{lemma}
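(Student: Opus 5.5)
We take equation~\eqref{eq:diff}, valid for $n,k>0$, and subtract from it the same identity written at $(n-1,k)$. This gives
\[
(n+k)\,\Delta_k F(n,k) - (n-1+k)\,\Delta_k F(n-1,k) \EQ \sum_{i=1}^{n} F(n-i,k) - \sum_{i=1}^{n-1} F(n-1-i,k) \EQ F(n-1,k),
\]
since the two sums telescope and leave only the term with index $n-1$. The left-hand side equals $(n+k)\Delta_n\Delta_k F(n,k) + \Delta_k F(n-1,k)$. Moving $\Delta_k F(n-1,k)=F(n-1,k)-F(n-1,k-1)$ to the right then yields
\[
(n+k)\,\Delta_n\Delta_k F(n,k) \EQ F(n-1,k) - \bigl(F(n-1,k)-F(n-1,k-1)\bigr) \EQ F(n-1,k-1),
\]
which is the claim. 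Here we have used that $\Delta_n$ and $\Delta_k$ commute.

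This is simply applying $\Delta_n$ to \eqref{eq:diff} rather than $\Delta_k$; the paper's phrasing notwithstanding, it is the difference in $n$ that makes the sum collapse. Alternatively, one could apply $\Delta_k$ to \eqref{eq:diff} and then combine the result with the $n$-shifted version. The direct route above is cleaner, and we will present that one.

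The only real obstacle is the range of validity. Subtracting the identity at $(n-1,k)$ requires \eqref{eq:diff} to hold there, which means $n-1>0$, i.e.\ $n\ge 2$. For $n=1$ we instead use the boundary condition $F(0,k)=1$ for all $k$. Since $\Delta_k F(0,k)=0$, the identity ``$(0+k)\Delta_k F(0,k)=0$ (empty sum)'' holds trivially for $k\ge 1$. The same subtraction therefore still goes through, and gives $(1+k)\Delta_n\Delta_k F(1,k)=F(0,k-1)=1$.

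We will also check this case directly from Figure~\ref{fig:F-values}, e.g.\ $(1+1)(\frac52-2-0)=1$. Finally, we will state the lemma for $n,k\ge 1$, noting that the case $k=0$ would require the separate binary recurrence and is not claimed.
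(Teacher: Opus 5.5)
Your proof is correct and is the same argument as the paper's. The paper applies $\Delta_n$ to \eqref{eq:diff} using a discrete product rule with $A(n)=n+k$ and $B(n)=\Delta_k F(n,k)$; that is exactly your subtraction of the $(n-1,k)$ instance, written in operator form, and it reaches the same intermediate identity $(n+k)\,\Delta_n\Delta_k F(n,k)+\Delta_k F(n-1,k)=F(n-1,k)$. Your explicit treatment of the range ($k\ge 1$, and the case $n=1$, where $F(0,k)=1$ makes the $(0,k)$ instance of \eqref{eq:diff} trivially true) is a useful detail that the paper leaves implicit.
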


\begin{proof}
    We use the following two simple and easily verifies rules for computing finite differences which are analogous to the familiar differentiation rule. First, $\Delta_n \sum_{i=1}^n F(n-i,k) = F(n-1,k)$. Second, if $A(n)$ and $B(n)$ are two functions, then $\Delta_n A(n)B(n) = A(n)\Delta_n B(n) + (\Delta_n A(n))B(n-1)$. Applying these rules, with $A(n)=n+k$ and $B(n)=\Delta_k F(n,k)$, to
    \eqref{eq:diff}, and noting that $\Delta_n A(n)=1$, we get:
    \[ (n+k)\,\Delta_n\Delta_k F(n,k) + \Delta_k F(n-1,k) \EQ F(n-1,k) \;,\]
    which is equivalent to the claim of the lemma.
\end{proof}

Since
\begin{align*}
    \Delta_n\Delta_k F(n,k) & \EQ \Delta_k F(n,k) - \Delta_k F(n-1,k) \\
    &\EQ (F(n,k)-F(n,k-1)) - (F(n-1,k)-F(n-1,k-1) \;,
\end{align*}
we immediately get the following corollary:
\begin{corollary}\label{C-finite}
    The function $F(n,k)$ of Definition~\ref{D:F(n,k)} satisfies the following finite recurrence relation:
    \[ F(n,k) \EQ F(n,k-1) + F(n-1,k) -\left(1-\frac{1}{n+k}\right)F(n-1,k-1) \;.\]
\end{corollary}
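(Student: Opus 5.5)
The plan is to prove the difference relation by working from equation~\eqref{eq:diff}, taking a difference in the $n$ direction, and reading off the claim. This means I first verify that~\eqref{eq:diff} is just a rewriting of~\eqref{F(n,k)-1}: multiplying~\eqref{F(n,k)-1} by $n+k$ and moving $F(n,k-1)$ to the left-hand side gives $(n+k)\bigl(F(n,k)-F(n,k-1)\bigr)=\sum_{i=1}^n F(n-i,k)$. This holds for all $n,k>0$. For $n=0$ both sides vanish, since $F(0,k)=F(0,k-1)=1$ and the sum is empty. So~\eqref{eq:diff} is valid for all $n\ge 0$ and $k\ge 1$.

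Next I apply $\Delta_n$ to both sides of~\eqref{eq:diff}, i.e.\ subtract the instance at $n-1$ from the instance at $n$; this needs $n\ge 1$ and $k\ge 1$. On the right, the sum telescopes. Rewriting it as $\sum_{j=0}^{n-1}F(j,k)$ shows that its $\Delta_n$ is exactly $F(n-1,k)$. On the left, I use the discrete product rule with $A(n)=n+k$ and $B(n)=\Delta_k F(n,k)$:
\[
(n+k)\Delta_kF(n,k)-(n-1+k)\Delta_kF(n-1,k) \EQ (n+k)\,\Delta_n\Delta_kF(n,k)+\Delta_kF(n-1,k).
\]
This identity can be checked directly by expanding both sides.

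Equating the two sides gives
\[
(n+k)\,\Delta_n\Delta_kF(n,k) \EQ F(n-1,k)-\Delta_kF(n-1,k) \EQ F(n-1,k)-\bigl(F(n-1,k)-F(n-1,k-1)\bigr) \EQ F(n-1,k-1),
\]
which is the statement.

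The only real obstacle is bookkeeping at the boundary. The lemma should be read for $n,k\ge 1$, since $\Delta_k$ at $k=0$ is undefined. At $n=1$, the relation~\eqref{eq:diff} is used at $n-1=0$, and that case was checked in the first step. Note also that the $k=0$ recurrence is never invoked, because the instance of~\eqref{eq:diff} used at index $k$ requires $k\ge1$ and $\Delta_k$ only reaches down to $k-1\ge 0$ as a value of $F$, not as a recurrence. Once these ranges are fixed, the computation above is complete.
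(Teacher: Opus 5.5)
Your argument is correct and follows the paper's route. You rewrite the recurrence as \eqref{eq:diff}, apply $\Delta_n$ using the telescoping sum and the discrete product rule, and obtain Lemma~\ref{L:diff}; you are also more careful than the paper at the $n=1$ boundary, since you check \eqref{eq:diff} at $n=0$. One small omission: what you end with is Lemma~\ref{L:diff}, not yet the stated recurrence. You still need to expand $\Delta_n\Delta_kF(n,k)=F(n,k)-F(n,k-1)-F(n-1,k)+F(n-1,k-1)$, divide by $n+k$ and rearrange, which is exactly the paper's one-line derivation of the corollary.
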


The recurrence relation in Lemma~\ref{L:diff} is discrete analog of the interesting partial differential equation $(x+y)\frac{\partial^2 F(x,y)}{\partial x\partial y}=F(x,y)$. 

Although the recurrence relations in Lemma~\ref{L:diff} and Corollary~\ref{C-finite} look simpler, the original recurrence relation of Definition~\ref{D:F(n,k)} has the important advantage of being \emph{monotone}, i.e., it does not involve subtractions. This makes the original definition more suited for obtaining upper bounds on~$F(n,k)$.

\section{Solving the recurrence for $F(n,0)$}\label{A:Binary}

For brevity, we let $F(n)=F(n,0)$. Note that $F(n)$ is an upper bound on the length of the path constructed by the \RAR\ algorithm when called on a binary AUSOs with $n$ states. The function $F(n)$ satisfies the following recurrence relation.

\[ F(n) \;\le\; F(n-1) + \frac{1}{n}\sum_{i=1}^n F(n-i) \quad,\quad n>0 \;, \]
\[ F(0) \EQ 1\]

The \RAR\ algorithm for binary AUSOs was studied by G{\"a}rtner \cite{Gartner02}. (By analogy with randomized simplex algorithms, G{\"a}rtner \cite{Gartner02} calls it the \emph{Random-Facet} algorithm.) In particular, he obtains a closed-form expression for $F(n)$, that can be easily proved by induction, and the following upper bound:

\begin{lemma}[G{\"a}rtner \cite{Gartner02}]\label{L:Fa}
    \[ F(n) \EQ \sum_{r=0}^n \binom{n}{r}\frac1{r!} \LE \sum_{r=0}^n \frac{n^r}{(r!)^2} \LE \left(\sum_{r=0}^n \frac{n^{r/2}}{r!}\right)^2 \LE e^{2\sqrt{n}}\;.\]
\end{lemma}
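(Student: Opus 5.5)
We prove the identity $F(n)=\sum_{r=0}^n\binom{n}{r}\frac{1}{r!}$ by strong induction on $n$, and then derive the three inequalities by direct term-wise comparisons.

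\emph{The identity.} Set $G(n)=\sum_{r\ge 0}\binom{n}{r}\frac{1}{r!}$; then $G(0)=1=F(0)$. It suffices to show that $G$ satisfies the defining recurrence $nG(n)=nG(n-1)+\sum_{j=0}^{n-1}G(j)$ for $n\ge1$. By Pascal's rule, $G(n)-G(n-1)=\sum_{r\ge1}\binom{n-1}{r-1}\frac{1}{r!}$. By the hockey-stick identity $\sum_{j=0}^{n-1}\binom{j}{r}=\binom{n}{r+1}$, we get $\sum_{j=0}^{n-1}G(j)=\sum_{r\ge0}\binom{n}{r+1}\frac1{r!}=\sum_{r\ge1}\binom{n}{r}\frac{1}{(r-1)!}$. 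Since $\binom{n}{r}=\frac{n}{r}\binom{n-1}{r-1}$, this equals $\sum_{r\ge1}n\binom{n-1}{r-1}\frac{1}{r!}=n\,(G(n)-G(n-1))$. This is exactly the recurrence, so $F=G$ by induction. The only real bookkeeping is matching the $\frac1r$ against the factorials; the rest is routine.

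\emph{The inequalities.} First, $\binom{n}{r}\le\frac{n^r}{r!}$, with strict inequality for $r\ge2$; this gives the first bound, and it is strict once $n\ge2$. For $n\le1$ the chain is an equality at the first step, which is why Lemma~\ref{L:Fa} uses $\le$. Next, expanding the square gives $\bigl(\sum_{r=0}^n\frac{n^{r/2}}{r!}\bigr)^2=\sum_{s,t}\frac{n^{(s+t)/2}}{s!\,t!}$. The diagonal terms $s=t=r$ alone give $\sum_r\frac{n^r}{(r!)^2}$, and the off-diagonal terms are nonnegative, which yields the second bound. Finally, truncating the exponential series gives $\sum_{r=0}^n\frac{(\sqrt n)^r}{r!}\le e^{\sqrt n}$, and squaring yields $e^{2\sqrt n}$.

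The main obstacle is only the identity step, specifically verifying the recurrence via the hockey-stick identity. The inequalities are immediate.
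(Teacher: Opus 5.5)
Your proof is correct and follows the route the paper indicates. The paper gives no proof of its own: it credits G\"artner and says only that the closed form is ``easily proved by induction.'' Your Pascal-plus-hockey-stick check that $G$ satisfies $nG(n)=nG(n-1)+\sum_{j<n}G(j)$ is that induction, and the three inequalities are the standard termwise comparisons you give. Your side remark is also right: at $n\le 1$ the first comparison is an equality, and at $n=0$ the whole chain is. So the strict-inequality version stated as Lemma~\ref{L:F} in Section~\ref{sec:Binary} is slightly too strong, while the $\le$ version here holds.
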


The bound of Lemma~\ref{L:Fa} can be improved, by a small polynomial factor, as follows:
\[ F(n) \EQ \sum_{r=0}^n \binom{n}{r}\frac1{r!} \LE \sum_{r=0}^n \frac{n^r}{(r!)^2} \LE I_0(2\sqrt{n}) \EQ \frac{e^{2\sqrt n}}{2\sqrt\pi\,n^{1/4}}\left(1+O(n^{-1/2})\right)\;,\]
where
\[ I_0(x) \EQ \sum_{r=0}^\infty \frac{1}{(r!)^2}\left(\frac{x}{2}\right)^{2r} \EQ \sum_{r=0}^{\infty}\frac{x^{2r}}{4^r\,(r!)^2}\]
is the modified Bessel function \cite[Chapter 10]{NISTDLMF}. 

It is interesting to note that $F(n)$ is also the expected number of increasing subsequences, of all possible lengths, of a random permutation. Lifschitz and Pittel \cite{LiPi81} conducted an even more accurate study of $F(n)$ and proved, among other things, that
\[
F(n) \EQ \frac{e^{2\sqrt n}}{2\sqrt{\pi e}\,n^{1/4}}\left(1+O\!\left(n^{-1/2}\right)\right),
\qquad n\to\infty.
\]
Note that this improves on the previous bound by a constant factor of $e^{-1/2}$. The same asymptotic expression also appears in Flajolet and Sedgewick \cite{FlajoletSedgewick2009AnalyticCombinatorics}, Eq.\ (88) on page 596, where it is claimed that it can be obtained using standard generating functions techniques.

It is also interesting to note that $F(n)=L_n(-1)$, where $L_n(x)$ is the $n$-th \emph{Laguerre polynomial} \cite[Chapter 18]{NISTDLMF}.. The Laguerre polynomials are defined as follows:
\[ L_n(x) \EQ \sum_{r=0}^n \binom{n}{r}\frac{(-x)^r}{r!} \;.\]
They also satisfy the finite recurrence:
\[ L_{n+1}(x) \EQ \frac{(2n+1-x)L_n(x)-nL_{n-1}(x)}{n+1} \;,\]
\[ L_1(x) \EQ 1-x \quad,\quad L_0(x) \EQ 1 \;.\]
This of course also leads to a finite recurrence, with polynomial coefficients, for $F(n)$. This finite recurrence can be easily proved by induction without referring to Laguerre polynomials. For the asymptotics of Laguere polynomials, see, e.g., Frenzen and Wong \cite{FrWo88}.

\section{A simple upper bound on $F(n,k)$}\label{A:Simple}

In this short section we show that the upper bound $F(n)\le e^{2\sqrt{n}}$ can be used to obtain a simple, non-trivial, and sometimes asymptotically tight, upper bound on $F(n,k)$.

\begin{lemma}
  For every $n,k\ge 0$ we have:
  \[ F(n,k) \LE F(n+k) \LT e^{2\sqrt{n+k}} \;.\]
\end{lemma}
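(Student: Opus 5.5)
The plan is to prove $F(n,k)\le F(n+k)$ by induction on $n+k$, and then quote Lemma~\ref{L:F} for the strict inequality $F(n+k)<e^{2\sqrt{n+k}}$. Two simple facts about the one-variable function $F(\cdot)=F(\cdot,0)$ will be needed. First, $F$ is nondecreasing. This is immediate from the recurrence, since $F(n)=F(n-1)+\frac1n\sum_{i=1}^n F(n-i)\ge F(n-1)$ and all values are positive. Second, for $N\ge 1$,
\[ F(N)-F(N-1) \EQ \frac1N\sum_{j=0}^{N-1}F(j) \;, \]
which is just the binary recurrence rewritten.

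The base cases are easy. If $n=0$ we have $F(0,k)=1=F(0)\le F(k)$ by monotonicity. If $k=0$ the claim is an equality.

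For $n,k>0$, the recurrence \eqref{F(n,k)-1} and the induction hypothesis give
\[ F(n,k) \LE F(n+k-1) + \frac1{n+k}\sum_{i=1}^n F(n-i+k) \;, \]
using $F(n,k-1)\le F(n+k-1)$ and $F(n-i,k)\le F(n-i+k)$. Write $N=n+k$. The sum on the right is $\sum_{j=k}^{N-1}F(j)$, and it is bounded by $\sum_{j=0}^{N-1}F(j)$ because every term is positive. Therefore
\[ F(n,k)\LE F(N-1)+\frac1N\sum_{j=0}^{N-1}F(j)\EQ F(N) \;, \]
which closes the induction.

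There is no real obstacle here. The only point needing care is that the recurrence for $F(n,k)$ with $k>0$ divides by $n+k$, which is exactly the $N$ appearing in the binary recurrence for $F(N)$. The terms $F(j)$ with $j<k$ that we added are simply dropped slack, and positivity of $F$ justifies adding them.
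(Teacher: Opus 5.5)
Your proof is correct and follows essentially the same route as the paper: induction on $n+k$, bounding $F(n,k-1)$ and each $F(n-i,k)$ by the corresponding one-variable values, padding the sum with the positive terms $F(0),\dots,F(k-1)$, and recognizing the binary recurrence for $F(n+k)$. One caveat applies equally to the paper's statement: the strict inequality $F(n+k)<e^{2\sqrt{n+k}}$ fails at $n=k=0$, where both sides equal $1$, so that part should be read for $n+k\ge 1$.
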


\begin{proof}
    The proof is by induction on $n$ and $k$. As base cases we have $F(n,0)=F(n)$, for every $0\le n$, and $F(0,k)=1\le F(k)$, for every $0\le k$. Assume now that $n,k>0$. Let $m=n+k$. By the recurrence relation and the induction hypothesis we have:
    \begin{align*}
    F(n,k) & \EQ F(n,k-1) + \frac1{n+k}\sum_{i=1}^n F(n-i,k) \\
    & \LE F(m-1) + \frac1m\sum_{i=1}^n F(m-i) \\
    & \LE F(m-1) + \frac1m\sum_{i=1}^m F(m-i) \\
    & \EQ F(m) \EQ F(n,k) \;. \qedhere
    \end{align*}
\end{proof}

We believe that for $0\le k\le n$, the bound $F(n,k)\le e^{2\sqrt{n+k}}$ is asymptotically tight, up to polynomial factors, but becomes loose when $k>n$.

\section{Arriving at the definitions of $U(n,k)$}\label{app:U}

In this section we sketch the heuristic that lead us, with the help of Google's Gemini, to reach the definition of $U(n,k)$ used in Section~\ref{S:Recurrence}. We expect the solution of the recurrence relation to exhibit an asymptotic behavior of the form $e^{S(n,k)}$. It is thus natural to find out which properties of $S(n,k)$ would enable us to prove by induction that $F(n,k)\le U(n,k)=e^{S(n,k)}$, as we eventually managed to do in Section~\ref{S:Recurrence}. The key inequality that needs to be satisfied for the induction to go through is inequality \eqref{eq:Q} of that section that we repeat here:
\begin{equation}
    \frac{U(n,k-1)}{U(n,k)} + \frac{1}{(n+k)\,U(n,k)}\sum_{i=1}^{n}U(n-i,k) \LE 1 \quad,\quad 0<n<k \;. \label{eq:Q2}
\end{equation}

We now use the following rough estimates, which are converted in Section~\ref{S:Recurrence} into rigorous bounds. We expect the estimates below to be fairly accurate as $S(n,k)$ is expected to be slowly increasing in~$n$ and~$k$, and hence the partial derivatives $S_n$ and $S_k$ are expected to be small.
    \[ \frac{U(n,k-1)}{U(n,k)} \EQ e^{S(n,k-1) - S(n,k)} \;\approx\; e^{-S_k} \;\approx\; 1-S_k \;.\]
    \[ \frac{U(n-i, k)}{U(n,k)} \approx  e^{-i S_n} \]
\[ \sum_{i=1}^n \frac{U(n-i,k)}{U(n,k)} \approx \sum_{i=1}^n e^{-i S_n} \;\approx\; \frac{e^{-S_n}}{1 - e^{-S_n}} \EQ \frac{1}{e^{S_n} - 1} \;\approx\; \frac{1}{S_n} \;.\]

Thus to satisfy \eqref{eq:Q2} is a tight way, we want
\[ (1-S_k) + \frac{1}{(n+k)S_n} \;\approx\; 1 \;.\]
Simplifying and turning this into an equality we get the partial differential equation
\begin{equation}
    (n+k)S_nS_k \EQ 1 \;, \label{eq:PDE}
\end{equation}
which is exactly the partial differential equation of Lemma~\ref{L:PDE}. This is the key equation that drives the whole analysis.

Based on numerical experiments it was reasonable to stipulate that $S(n,k)=2\sqrt{n\Phi(\frac{k}{n})}$. By differentiating, we get
\[ S_n \EQ \frac{\Phi - \alpha \Phi'}{\sqrt{n \Phi}} \quad,\quad S_k \EQ \frac{\Phi'}{\sqrt{n \Phi}} \;,\]
exactly as in Lemma~\ref{L:SnSk}. Plugging these expressions into \eqref{eq:PDE} we get
\[n(1+\alpha)  \frac{\Phi - \alpha \Phi'}{\sqrt{n\Phi}}  \frac{\Phi'}{\sqrt{n\Phi}}  \EQ 1 \;,\]
which is immediately simplified into the differential equation \eqref{eq:Phi} of Section~\ref{S:Recurrence}.

\section{Properties of $\Phi(\alpha)$}\label{app:Phi}

\subsection{Existence and uniqueness of $\Phi(\alpha)$}

Our first goal is to prove that $\Phi(\alpha)=\Phi_-(\alpha)$, the solution of \eqref{eq:Phi2} with the minus sign, i.e., 
\begin{equation}
    \Phi' \EQ f(\alpha,\Phi) \EQ \frac{\Phi - \sqrt{\Phi^2 - \frac{4\alpha}{1+\alpha}\Phi}}{2\alpha} \quad,\quad \Phi(1)=2\;, \label{eq:Phi3}
\end{equation}
exists and is unique for every $\alpha\ge 1$.

In principle, this is an \emph{Initial Value Problem} (IVP) of the form $x'=f(t,x)$, $x(t_0)=x_0$. If $f$ is Lipschitz, the existence and uniqueness of the solution $x$ follows immediately from a standard theorem of Picard and Lindel\"{o}f \cite[Theorem~2.2]{Teschl12}. We are faced, however, with two technical difficulties. First, due to the square root appearing in our $f(\alpha,\Phi)$, the function is not Lipschitz at $(1,2)$. (The partial derivative $\frac{\partial f}{\partial\Phi}\to -\infty$ as $\Phi\to 2^+$). Second, $f$ is only well-defined when $\Phi\ge \frac{4\alpha}{1+\alpha}$ so we need to show that the solution whose existence we are trying to prove satisfies this condition.

To overcome the second difficulty, we define a modified function 
\[ \bar{f}(\alpha,\Phi) \EQ \frac{\Phi - \sqrt{\left(\Phi^2 - \frac{4\alpha}{1+\alpha}\Phi\right)^+}}{2\alpha} \;, \] 
where $x^+=\max\{x,0\}$. Now $\bar{f}(\alpha,\Phi)$ is well-defined and continuous for every $\alpha\ge 1$ and every~$\Phi$.
The existence of at least one solution defined for every $\alpha\ge 1$ then follows from Peano existence theorem \cite[Theorem~2.19]{Teschl12}, which only requires the continuity of~$\bar{f}$, and a standard extension criterion \cite[Corollary~2.15]{Teschl12}. To prove uniqueness, we temporarily consider $\bar{f}^+$ instead of $\bar{f}$, as $\bar{f}^+$ satisfies a one-sided Lipschitz condition $\bar{f}^+(t,\phi_1)-\bar{f}^+(t,\phi_2)\le L(\phi_1-\phi_2)$, for $\phi_1>\phi_2$, where $L=\frac{1}{2}$. It then follows that the initial value problem $\Phi'=\bar{f}^+(\alpha,\Phi)$, $\Phi(1)=2$ has a unique solution defined for every $\alpha\ge 1$.  (See \cite[Problem~2.25]{Teschl12} and \cite[\S 9 X]{Walter98}.)

To show that the unique solution of $\Phi'=\bar{f}^+(\alpha,\Phi)$, $\Phi(1)=2$ is also the unique solution of $\Phi'=f(\alpha,\Phi)$, $\Phi(1)=2$, we need to show that $\Phi(\alpha)\ge g(\alpha)=\frac{4\alpha}{1+\alpha}$. Relying on the comparison principle \cite[Lemma 1.2]{Teschl12}, it is enough to check that $g'(\alpha)\le \bar{f}^+(\alpha,g(\alpha))=f(\alpha,g(\alpha))$. This holds as 
\[ g'(\alpha) \EQ \left(\frac{2}{1+\alpha}\right)^2 \quad,\quad f(\alpha,g(\alpha))\EQ \frac{g(\alpha)}{2\alpha}\EQ \frac{2}{1+\alpha} \;,\]
and $\frac{2}{1+\alpha}\le 1$, for $\alpha\ge 1$.

\subsection{Upper bound on $\Phi(\alpha)$}

Let
\[ B(\alpha) \EQ 2 + \s + \ln(\s+1) \EQ 2 + \ln\frac{1+\alpha}{2} + \ln\left(\ln\frac{1+\alpha}{2}+1\right) \quad,\quad \s\EQ \ln\frac{1+\alpha}{2}\;.\]
Our goal is to prove that $\Phi(\alpha)\le B(\alpha)$, for every $\alpha\ge 1$. Using the comparison principle, as above, it is enough to show that
\[ B' \GE f(\alpha,B) \EQ \frac{B-\sqrt{B^2-g(\alpha)B}}{2\alpha} \quad,\quad \alpha\ge 1\;,\]
or equivalently that 
\[ 2\alpha B' - B \GE -\sqrt{B^2-g(\alpha)B} \quad,\quad \alpha\ge 1\;.\]
It is thus enough to show that
\[ (2\alpha B' - B)^2 \LE B^2-g(\alpha)B \quad,\quad \alpha\ge 1\;, \]
which simplifies to
\[ P \EQ \alpha (B')^2 -BB' + \frac{B}{1+\alpha} \LE 0 \quad,\quad \alpha\ge 1\;. \]
Note that this is exactly the form of Equation \eqref{eq:Phi} of Section~\ref{S:Recurrence}. A simple calculation shows that 
\[ B'(\alpha) \EQ \frac{1}{1+\alpha} + \frac{1}{(1+\alpha)(\s+1)} \EQ \frac{1}{1+\alpha}\, \frac{\s+2}{\s+1}\;.\]
Plugging this expression for $B'$ into the definition of~$P$, we get
\begin{gather*}P  \EQ \alpha \left( \frac{1}{1+\alpha} \left( \frac{\s+2}{\s+1} \right) \right)^2 - B \left( \frac{1}{1+\alpha} \left( \frac{\s+2}{\s+1} \right) \right) + \frac{B}{1+\alpha}\\
\EQ
\frac{1}{1+\alpha} \left( \frac{\alpha}{1+\alpha} \frac{(\s+2)^2}{(\s+1)^2} - \frac{B}{\s+1} \right) 
\EQ \frac{1}{(1+\alpha)(\s+1)} \left( \frac{\alpha}{1+\alpha} \frac{(\s+2)^2}{\s+1} - B \;.\right)
\end{gather*}
As $\s=\ln\frac{1+\alpha}{2}$, we have $\alpha=2e^\s-1$ and hence
$\frac{\alpha}{1+\alpha} \EQ \frac{2e^\s - 1}{2e^\s} \EQ 1 - \frac{1}{2}e^{-\s}$.
Thus,
\[ P \EQ \frac{1}{(1+\alpha)(\s+1)} \left( \left(1 - \frac{1}{2}e^{-\s}\right) \frac{(\s+2)^2}{\s+1} - B \right) \;.\]
To prove that $P\le 0$, we need to show that
\[ W(\s) = B(\s) - \left(1 - \frac{1}{2}e^{-\s}\right) \frac{(\s+2)^2}{\s+1} \GE 0 \;.\]
At $\s=0$, we have $W(0)=0$. It is thus enough to show that $W'(\s)\ge 0$, for $\s\ge 0$. A tedious calculation shows that
\[ W'(\s) = \frac{\s+2}{2e^\s(\s+1)^2} \Big[ 2e^\s - (\s^2 + 2\s + 2) \Big] \GE 0\;,\]
where the inequality follows as $e^\s\ge 1+\s+\frac{\s^2}{2}$, for every $\s\ge 0$. This completes the proof of the upper bound.

\subsection{Lower bound on $\Phi(\alpha)$}
Let
\[ L(\alpha) \EQ 2+\s+\ln(1+\s) - \left(\frac{\s}{\s+1}\right)^3 \EQ 2 + \ln\frac{1+\alpha}{2} + \ln\left(1+\ln\frac{1+\alpha}{2}\right) - \left(\frac{\ln\frac{1+\alpha}{2}}{1+\ln\frac{1+\alpha}{2}}\right)^3\;.\]
An even more tedious calculation shows that 
\[ L' \LE f(\alpha,L) \quad,\quad \alpha\ge 1\;,\]
which implies that $L(\alpha)\le \Phi(\alpha)$, for every $\alpha\ge 1$.

Note that the upper and lower bounds $B(\alpha)$ and $L(\alpha)$ differ by at most~$1$. There particular forms of $B(\alpha)$ and $L(\alpha)$ were chosen because they additionally satisfy $L(1)=\Phi(1)=B(1)=2$ and $L'(1)=\Phi'(1)=B'(1)=1$.

\subsection{Plot of $\Phi(\alpha)$ and its lower and upper bounds}

\begin{figure}
    \centering
    \includegraphics[width=0.9\linewidth]{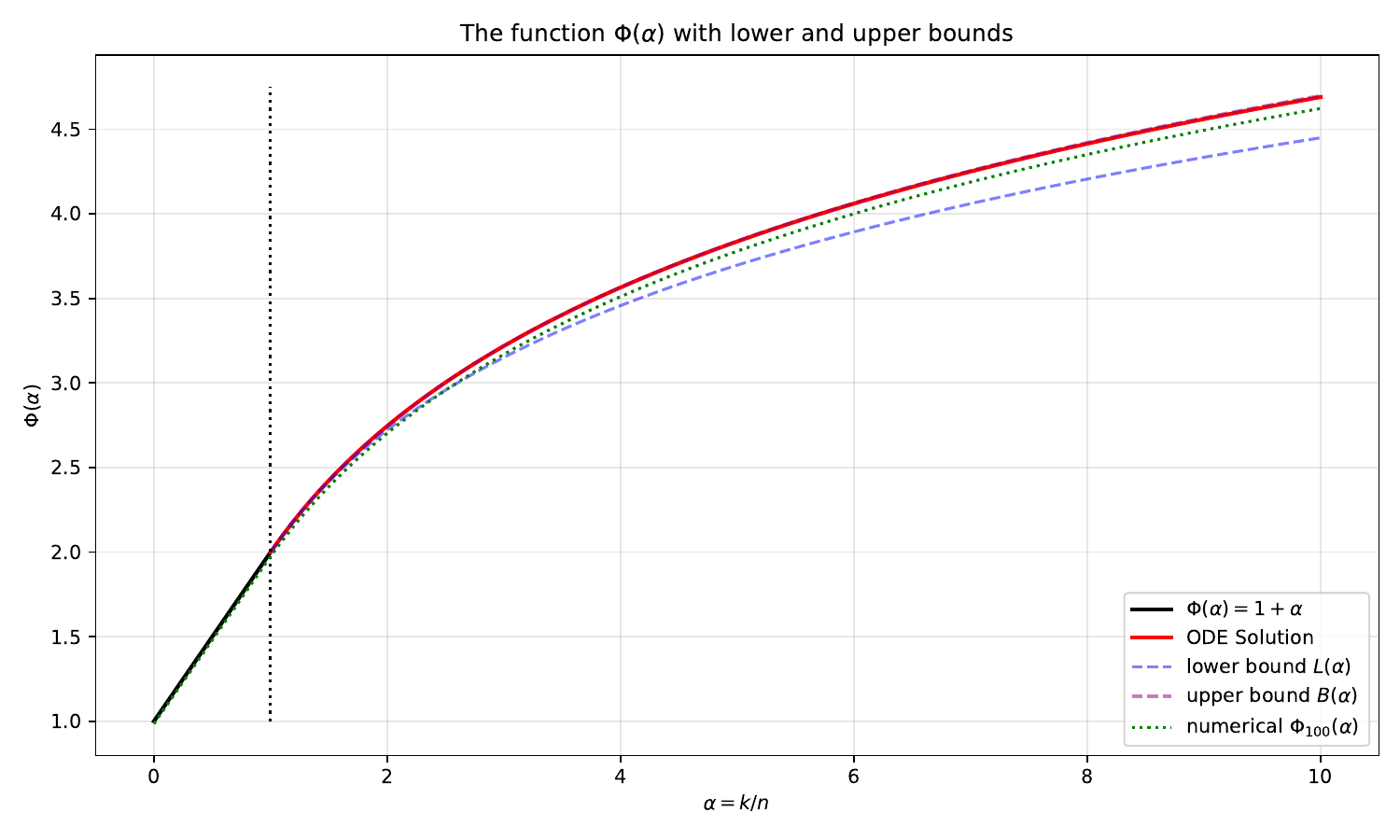}
    \caption{A plot of $\Phi(\alpha)$ and related functions.}
    \label{F:plot-Phi}
\end{figure}


A plot depicting the functions $\Phi(\alpha)$ and its lower and upper bounds $L(\alpha)$ and $B(\alpha)$ is given in Figure~\ref{F:plot-Phi}. The values of $\Phi(\alpha)$ were obtained by numerically solving the differential equation defining~$\Phi(\alpha)$. As claimed, $L(\alpha)$ is a lower bound, and~$B(\alpha)$ is an upper bound on~$\Phi(\alpha)$. The upper bound $B(\alpha)$ seems to be extremely tight, at least for $\alpha\le 10$. Also shown in the plot, as a dotted green line, is the empirical function $\tilde{\Phi}_{100}(\alpha) = \left(\frac{\ln F(n,\alpha n)}{2\sqrt{n}}\right)^2$, for $n=100,000$. As expected, this function lies below $\Phi(\alpha)$, consistent with the upper bound claim. The small gap between $\tilde{\Phi}_{100}(\alpha)$ and $\Phi(\alpha)$ is explained by the hypothesis that $F(n,\alpha n)$ is asymptotically of the form $(n(1+\alpha))^{A}\,e^{2\sqrt{n\Phi(\alpha)}}$ for some constant~$A<0$. Choosing $A=-\frac{1}{2}$ yields an almost perfect fit. The choice $A=-\frac{1}{2}$ is also supported by some heuristic calculations that we do not repeat here. The factor $n+k$ used in our upper bound proof, which corresponds to the choice $A=1$, was added to simplify the proof. Proving that the $n+k$ factor can possibly be replaced by a $(n+k)^{-1/2}$ would require a more complicated analysis that would use more terms in the Taylor expansions used in the current proof.

\subsection{Concavity of $\Phi$}

To prove the strict concavity of $\Phi(\alpha)$ for $\alpha>1$, we prove that $\Phi''(\alpha)<0$, for $\alpha>1$. Recall that $\Phi$ satisfies Equations~\eqref{eq:Phi} and~\eqref{eq:Phi2} from Section~\ref{S:Recurrence}:
\begin{equation*}
    \alpha (\Phi')^2 - \Phi \Phi' + \frac{\Phi}{1+\alpha} \EQ 0 \quad,\quad
    \Phi' \EQ \frac{\Phi - D}{2\alpha} \quad,\quad D \EQ \sqrt{\Phi^2 - \frac{4\alpha\Phi}{1+\alpha}} \;.
\end{equation*}
Differentiating the first equation with respect to $\alpha$ yields:
\begin{equation*}
    (\Phi')^2 + 2\alpha\Phi'\Phi'' - \left(\Phi'\Phi' + \Phi\Phi''\right) + \frac{\Phi'(1+\alpha) - \Phi}{(1+\alpha)^2} \EQ 0 \;.
\end{equation*}
After simplifying we get:
\begin{equation*}
    \Phi''(2\alpha\Phi' - \Phi) \EQ \frac{\Phi - (1+\alpha)\Phi'}{(1+\alpha)^2} \;.
\end{equation*}
Note that $2\alpha\Phi' - \Phi=-D<0$, for $\alpha>1$. To finish the proof, note that
\[ \Phi - (1+\alpha)\Phi' \EQ \Phi - (1+\alpha) \frac{\Phi - D}{2\alpha} \EQ \frac{\alpha-1}{2\alpha}\Phi + \frac{1+\alpha}{2\alpha}D \GT 0\;,\]
as both $\Phi>0$ and $D>0$, for $\alpha>1$.

\section{Partial derivatives of $S(n,k)$}\label{A:partial}

We give the explicit computation of $S_n$ and $S_k$. The computations of $S_{nn}$ and $S_{kk}$ are similar.  Recall that $S(n,k)=2\sqrt{n\Phi(\alpha)}=2\sqrt{n\Phi(\frac{k}{n})}$. As before, we use the shorthands $\Phi=\Phi(\alpha)$ and $\Phi'=\Phi'(\alpha)$.

    Using the chain rule and $\frac{\partial \alpha}{\partial k} = \frac{1}{n}$, we have
    \[ S_k \EQ \frac{\partial}{\partial k} \left( 2\sqrt{n} \sqrt{\Phi} \right) \EQ 2\sqrt{n} \cdot \frac{1}{2\sqrt{\Phi}} \cdot \Phi' \cdot \frac{1}{n} \EQ \frac{\Phi'}{\sqrt{n \Phi}} \;. \]
    Using $\frac{\partial \alpha}{\partial n} = -\frac{k}{n^2} = -\frac{\alpha}{n}$, we have
\begin{align*}
    S_n &\EQ \frac{\partial}{\partial n} \left( 2\sqrt{n\Phi} \right) = 2 \left( \frac{\sqrt{\Phi}}{2\sqrt{n}} + \sqrt{n} \frac{\Phi'}{2\sqrt{\Phi}} \cdot\left( -\frac{\alpha}{n} \right) \right) \\
    &\EQ \frac{\Phi}{\sqrt{n\Phi}} - \frac{\alpha \Phi'}{\sqrt{n\Phi}} \EQ \frac{\Phi - \alpha \Phi'}{\sqrt{n\Phi}} \;. \qedhere
    \end{align*}

Since $S(n,k)$ is positively homogeneous of degree $1/2$, i.e., $S(tn,tk)=t^{1/2}S(n,k)$, it follows from Euler's Theorem on Homogeneous Functions (see, e.g., Apostol \cite{Apostol1969calculus}) that $n S_n + k S_k = \frac{1}{2}S = \sqrt{n\Phi}$, as we can also verify directly.

\section{The concavity of $S(n,k)$}\label{A:concave}

Recall that
\[
          S_{nn} \EQ \frac{\alpha^2 \Phi \Phi'' - \frac{1}{2}(\Phi - \alpha\Phi')^2}{(n\Phi)^{3/2}} \quad,\quad
    S_{kk} \EQ \frac{\Phi \Phi'' - \frac{1}{2}(\Phi')^2}{(n\Phi)^{3/2}}  \;.
\]
For $\alpha>1$ we have $\Phi>0$ and $\Phi''<0$. We also have $(\Phi - \alpha\Phi')^2\ge 0$ and $(\Phi')^2\ge 0$, Thus $S_{nn}<0$ and $S_{kk}<0$, as required.

\newpage
\restoregeometry          
\pagestyle{plain}         

\printbibliography

@string{infcomp={Information and Computation}}

@string{jacm={Journal of the ACM}}

@string{rsa={Random Structures and Algorithms}}

@string{sicomp={SIAM Journal on Computing}}

@string{tcs={Theoretical Computer Science}}

@string{ucompmath={USSR Computational Mathematics and Mathematical Physics}}

@string{focs91={Proceedings of the 32nd FOCS}}

@string{focs01={Proc.\ of 42th FOCS}}

@string{icalp10={Proc.\ of 37th ICALP}}

@string{isaac09={Proc.\ of 20th ISAAC}}

@string{lics17={Proc.\ of 32nd LICS}}

@string{soda06={Proc.\ of 17th SODA}}

@string{stoc92={Proc.\ of 24th STOC}}

@inproceedings{AnMi09,
  author    = {Daniel Andersson and Peter Bro Miltersen},
  title     = {The Complexity of Solving Stochastic Games on Graphs},
  booktitle = isaac09,
  year      = {2009},
  pages     = {112--121},
  doi       = {10.1007/978-3-642-10631-6_13}
}

@article{BjVo05,
  author    = {Henrik Bj{\"o}rklund and Sergei Vorobyov},
  title     = {Combinatorial structure and randomized subexponential algorithms for infinite games},
  journal   = tcs,
  volume    = {349},
  number    = {3},
  year      = {2005},
  pages     = {347--360},
  doi       = {10.1016/j.tcs.2005.07.041}
}

@article{BjVo07,
  author       = {Henrik Bj{\"{o}}rklund and
                  Sergei G. Vorobyov},
  title        = {A combinatorial strongly subexponential strategy improvement algorithm
                  for mean payoff games},
  journal      = {Discret. Appl. Math.},
  volume       = {155},
  number       = {2},
  pages        = {210--229},
  year         = {2007},
  url          = {https://doi.org/10.1016/j.dam.2006.04.029},
  doi          = {10.1016/J.DAM.2006.04.029},
  bibsource    = {dblp computer science bibliography, https://dblp.org}
}

@article{Condon92,
  author    = {Anne Condon},
  title     = {The Complexity of Stochastic Games},
  journal   = infcomp,
  volume    = {96},
  year      = {1992},
  pages     = {203--224},
  doi       = {10.1016/0890-5401(92)90048-K}
}

@book{Derman72,
  author    = {Cyrus Derman},
  title     = {Finite state {Markov} decision processes},
  publisher = {Academic Press},
  address   = {New York},
  year      = {1972}
}

@article{EhMy79,
  author    = {Andrzej Ehrenfeucht and Jan Mycielski},
  title     = {Positional Strategies for Mean Payoff Games},
  journal   = {International Journal of Game Theory},
  volume    = {8},
  year      = {1979},
  pages     = {109--113},
  doi       = {10.1007/BF01768705}
}

@inproceedings{EmJu91,
  author    = {E. Allen Emerson and Charanjit S. Jutla},
  title     = {Tree automata, $\mu$-calculus and determinacy},
  booktitle = focs91,
  publisher = {IEEE Computer Society Press},
  year      = {1991},
  pages     = {368--377},
  doi       = {10.1109/SFCS.1991.185392}
}

@inproceedings{Fearnley10,
  author    = {John Fearnley},
  title     = {Exponential Lower Bounds for Policy Iteration},
  booktitle = icalp10,
  year      = {2010},
  pages     = {551--562},
  doi       = {10.1007/978-3-642-14162-1_46}
}

@article{Friedmann11,
  author    = {Oliver Friedmann},
  title     = {An Exponential Lower Bound for the Latest Deterministic Strategy Iteration Algorithms},
  journal   = {Logical Methods in Computer Science},
  volume    = {7},
  number    = {3},
  year      = {2011},
  doi       = {10.2168/LMCS-7(3:23)2011}
}

@phdthesis{Gartner95PhD,
  author = {Bernd G{\"a}rtner},
  title  = {Randomized Optimization by Simplex Type Methods},
  school = {Freie Universit{\"a}t Berlin},
  year   = {1995}
}

@article{Gartner95,
  author    = {Bernd G{\"{a}}rtner},
  title     = {A Subexponential Algorithm for Abstract Optimization Problems},
  journal   = sicomp,
  volume    = {24},
  number    = {5},
  year      = {1995},
  pages     = {1018--1035},
  doi       = {10.1137/S0097539793250287}
}

@article{Gartner02,
  author    = {Bernd G{\"a}rtner},
  title     = {The Random-Facet simplex algorithm on combinatorial cubes},
  journal   = rsa,
  volume    = {20},
  number    = {3},
  year      = {2002},
  pages     = {353--381},
  doi       = {10.1002/rsa.10034}
}

@article{Goldwasser95,
  author    = {Michael Goldwasser},
  title     = {A survey of linear programming in randomized subexponential time},
  journal   = {SIGACT News},
  volume    = {26},
  number    = {2},
  year      = {1995},
  pages     = {96--104},
  doi       = {10.1145/202840.202847}
}

@book{GrThWi02,
  editor    = {Erich Gr{\"a}del and Wolfgang Thomas and Thomas Wilke},
  title     = {Automata, Logics, and Infinite Games. A Guide to Current Research},
  series    = {LNCS},
  volume    = {2500},
  publisher = {Springer},
  year      = {2002},
  doi       = {10.1007/3-540-36387-4}
}

@article{GuKaKh88,
  author  = {Vladimir A. Gurvich and Alexander V. Karzanov and Leonid G. Khachiyan},
  title   = {Cyclic Games and an Algorithm to Find Minimax Cycle Means in Directed Graphs},
  journal = ucompmath,
  volume  = {28},
  year    = {1988},
  pages   = {85--91},
  note    = {Translated from Zhurnal Vychislitel'noi Matematiki i Matematicheskoi Fiziki 28 (1988) 1407--1417.}
}

@article{Halman07,
  author       = {Nir Halman},
  title        = {Simple Stochastic Games, Parity Games, Mean Payoff Games and Discounted
                  Payoff Games Are All LP-Type Problems},
  journal      = {Algorithmica},
  volume       = {49},
  number       = {1},
  pages        = {37--50},
  year         = {2007},
  url          = {https://doi.org/10.1007/s00453-007-0175-3},
  doi          = {10.1007/S00453-007-0175-3},
  bibsource    = {dblp computer science bibliography, https://dblp.org}
}

@article{HaMiZw13,
  author    = {Thomas Dueholm Hansen and Peter Bro Miltersen and Uri Zwick},
  title     = {Strategy Iteration Is Strongly Polynomial for 2-Player Turn-Based Stochastic Games with a Constant Discount Factor},
  journal   = jacm,
  volume    = {60},
  number    = {1},
  year      = {2013},
  pages     = {1},
  doi       = {10.1145/2432622.2432623}
}

@inproceedings{HaZw15,
  author    = {Thomas Dueholm Hansen and Uri Zwick},
  title     = {An Improved Version of the Random-Facet Pivoting Rule for the Simplex Algorithm},
  booktitle = {Proceedings of the Forty-Seventh Annual {ACM} Symposium on Theory of Computing, {STOC} 2015},
  publisher = {{ACM}},
  year      = {2015},
  pages     = {209--218},
  doi       = {10.1145/2746539.2746557}
}

@book{Howard60,
  author    = {Ronald A. Howard},
  title     = {Dynamic programming and {Markov} processes},
  publisher = {MIT Press},
  year      = {1960}
}

@inproceedings{Kalai92,
  author    = {Gil Kalai},
  title     = {A Subexponential Randomized Simplex Algorithm (Extended Abstract)},
  booktitle = stoc92,
  year      = {1992},
  pages     = {475--482},
  doi       = {10.1145/129712.129759}
}

@article{Kalai92b,
  author  = {Gil Kalai},
  title   = {Upper Bounds for the Diameter and Height of Graphs of Convex Polyhedra},
  journal = {Discrete {\&} Computational Geometry},
  volume  = {8},
  year    = {1992},
  pages   = {363--372},
  doi     = {10.1007/BF02293053}
}

@article{KaKl92,
  author  = {Gil Kalai and Daniel J. Kleitman},
  title   = {Quasi-polynomial bounds for the diameter of graphs and polyhedra},
  journal = {Bull.\ Amer.\ Math.\ Soc.},
  volume  = {26},
  year    = {1992},
  pages   = {315--316},
  doi     = {10.1090/S0273-0979-1992-00285-9}
}

@article{LiPi81,
  author  = {Vladimir Lifschitz and Boris Pittel},
  title   = {The number of increasing subsequences of the random permutation},
  journal = {Journal of Combinatorial Theory, Series A},
  volume  = {31},
  number  = {1},
  year    = {1981},
  pages   = {1--20},
  doi     = {10.1016/0097-3165(81)90049-2}
}

@article{Ludwig95,
  author       = {Walter Ludwig},
  title        = {A Subexponential Randomized Algorithm for the Simple Stochastic Game
                  Problem},
  journal      = {Inf. Comput.},
  volume       = {117},
  number       = {1},
  pages        = {151--155},
  year         = {1995},
  url          = {https://doi.org/10.1006/inco.1995.1035},
  doi          = {10.1006/INCO.1995.1035},
  bibsource    = {dblp computer science bibliography, https://dblp.org}
}

@article{Matousek94,
  author       = {Ji{\v{r}}{\'i} Matousek},
  title        = {Lower Bounds for a Subexponential Optimization Algorithm},
  journal      = {Random Struct. Algorithms},
  volume       = {5},
  number       = {4},
  pages        = {591--608},
  year         = {1994},
  url          = {https://doi.org/10.1002/rsa.3240050408},
  doi          = {10.1002/RSA.3240050408},
  bibsource    = {dblp computer science bibliography, https://dblp.org}
}

@article{MaShWe96,
  author       = {Ji{\v{r}}{\'i} Matousek and
                  Micha Sharir and
                  Emo Welzl},
  title        = {A Subexponential Bound for Linear Programming},
  journal      = {Algorithmica},
  volume       = {16},
  number       = {4/5},
  pages        = {498--516},
  year         = {1996},
  url          = {https://doi.org/10.1007/BF01940877},
  doi          = {10.1007/BF01940877},
  bibsource    = {dblp computer science bibliography, https://dblp.org}
}

@book{Puterman94,
  author    = {Martin L. Puterman},
  title     = {Markov decision processes},
  publisher = {Wiley},
  year      = {1994},
  doi       = {10.1002/9780470316887}
}

@article{Shapley53,
  author  = {Lloyd S. Shapley},
  title   = {Stochastic Games},
  journal = {Proceedings of the National Academy of Sciences U.S.A.},
  year    = {1953},
  volume  = {39},
  pages   = {1095--1100},
  doi     = {10.1073/pnas.39.10.1095}
}

@inproceedings{SzWe01,
  author    = {Tibor Szab{\'o} and Emo Welzl},
  title     = {Unique Sink Orientations of Cubes},
  booktitle = focs01,
  year      = {2001},
  pages     = {547--555},
  doi       = {10.1109/SFCS.2001.959931}
}

@article{WiHo88,
  author  = {Kathy {Williamson Hoke}},
  title   = {Completely unimodal numberings of a simple polytope},
  journal = {Discrete Applied Mathematics},
  volume  = {20},
  number  = {1},
  year    = {1988},
  pages   = {69--81},
  doi     = {10.1016/0166-218X(88)90042-X}
}

@article{ZwPa96,
  author  = {Uri Zwick and Michael S. Paterson},
  title   = {The complexity of mean payoff games on graphs},
  journal = tcs,
  volume  = {158},
  pages   = {343--359},
  number  = {1--2},
  year    = {1996},
  doi     = {10.1016/0304-3975(95)00188-3}
}

@book{FlajoletSedgewick2009AnalyticCombinatorics,
  author    = {Philippe Flajolet and Robert Sedgewick},
  title     = {Analytic Combinatorics},
  publisher = {Cambridge University Press},
  address   = {Cambridge},
  year      = {2009},
  doi       = {10.1017/CBO9780511801655}
}

@software{NISTDLMF,
  author       = {Frank W. J. Olver and Adri B. Olde Daalhuis and Daniel W. Lozier and Barry I. Schneider and Ronald F. Boisvert and Charles W. Clark and Bruce R. Miller and Bonita V. Saunders and Howard S. Cohl and Mary A. McClain},
  title        = {{NIST Digital Library of Mathematical Functions}},
  howpublished = {\url{https://dlmf.nist.gov/}},
  year         = {2024},
  version      = {1.2.0}
}

@article{GaMoRu08,
  author  = {Bernd G{\"a}rtner and Walter D. Morris Jr. and Leo R{\"u}st},
  title   = {Unique Sink Orientations of Grids},
  journal = {Algorithmica},
  volume  = {51},
  number  = {2},
  year    = {2008},
  pages   = {200--235},
  doi     = {10.1007/s00453-007-9090-x}
}

@inproceedings{GaSc06,
  author    = {Bernd G{\"a}rtner and Ingo Schurr},
  title     = {Linear programming and unique sink orientations},
  booktitle = soda06,
  year      = {2006},
  pages     = {749--757},
  doi       = {10.1145/1109557.1109639}
}

@book{gamesbook,
  author    = {Nathana{\"e}l Fijalkow and Nathalie Bertrand and Patricia Bouyer-Decitre and Romain Brenguier and Arnaud Carayol and John Fearnley and Hugo Gimbert and Florian Horn and Rasmus Ibsen-Jensen and Nicolas Markey and Benjamin Monmege and Petr Novotn{\'y} and Mickael Randour and Ocan Sankur and Sylvain Schmitz and Olivier Serre and Mateusz Skomra},
  editor    = {Nathana{\"e}l Fijalkow},
  title     = {Games on Graphs},
  publisher = {Online},
  year      = {2025}
}

@book{BenderOrszag1999,
  author    = {Carl M. Bender and Steven A. Orszag},
  title     = {Advanced Mathematical Methods for Scientists and Engineers I: Asymptotic Methods and Perturbation Theory},
  publisher = {Springer},
  address   = {New York},
  year      = {1999},
  doi       = {10.1007/978-1-4757-3069-2}
}

@article{FrWo88,
  author  = {Colin L. Frenzen and Roderick Wong},
  title   = {Uniform Asymptotic Expansions of {Laguerre} Polynomials},
  journal = {SIAM Journal on Mathematical Analysis},
  volume  = {19},
  number  = {5},
  pages   = {1232--1248},
  year    = {1988},
  doi     = {10.1137/0519087}
}

@book{Apostol1969calculus,
  author    = {Tom M. Apostol},
  title     = {Calculus, Volume {II}: Multi-Variable Calculus and Linear Algebra with Applications},
  edition   = {2nd},
  year      = {1969},
  publisher = {John Wiley \& Sons},
  address   = {New York}
}

@article{AkGaGu12,
  author  = {Marianne Akian and St{\'{e}}phane Gaubert and Alexander E. Guterman},
  title   = {Tropical Polyhedra are Equivalent to mean Payoff Games},
  journal = {Int. J. Algebra Comput.},
  volume  = {22},
  number  = {1},
  year    = {2012},
  doi     = {10.1142/S0218196711006674}
}

@article{ABGJ14,
  author  = {Xavier Allamigeon and Pascal Benchimol and St{\'{e}}phane Gaubert and Michael Joswig},
  title   = {Combinatorial Simplex Algorithms Can Solve Mean Payoff Games},
  journal = {{SIAM} J. Optim.},
  volume  = {24},
  number  = {4},
  pages   = {2096--2117},
  year    = {2014},
  doi     = {10.1137/140953800}
}

@article{AGS18,
  author  = {Xavier Allamigeon and St{\'{e}}phane Gaubert and Mateusz Skomra},
  title   = {Solving generic nonarchimedean semidefinite programs using stochastic game algorithms},
  journal = {J. Symb. Comput.},
  volume  = {85},
  pages   = {25--54},
  year    = {2018},
  doi     = {10.1016/j.jsc.2017.07.002}
}

@book{ApGr11,
  author    = {Krzysztof R. Apt and Erich Gr{\"{a}}del},
  title     = {Lectures in game theory for computer scientists},
  year      = {2011},
  publisher = {Cambridge University Press}
}

@article{Ye05,
  author  = {Yinyu Ye},
  title   = {A New Complexity Result on Solving the {Markov Decision Problem}},
  journal = {Math. Oper. Res.},
  volume  = {30},
  number  = {3},
  pages   = {733--749},
  year    = {2005},
  doi     = {10.1287/moor.1050.0149}
}

@article{Ye11a,
  author  = {Yinyu Ye},
  title   = {The Simplex and Policy-Iteration Methods Are Strongly Polynomial for the {Markov Decision Problem} with a Fixed Discount Rate},
  journal = {Math. Oper. Res.},
  volume  = {36},
  number  = {4},
  pages   = {593--603},
  year    = {2011},
  doi     = {10.1287/moor.1110.0516}
}

@article{Scherrer16,
  author  = {Bruno Scherrer},
  title   = {Improved and Generalized Upper Bounds on the Complexity of Policy Iteration},
  journal = {Math. Oper. Res.},
  volume  = {41},
  number  = {3},
  pages   = {758--774},
  year    = {2016},
  doi     = {10.1287/moor.2015.0753}
}

@book{FiVr12,
  author    = {Jerzy Filar and Koos Vrieze},
  title     = {Competitive Markov decision processes},
  year      = {2012},
  publisher = {Springer Science \& Business Media}
}

@book{solan2022course,
  author    = {Eilon Solan},
  title     = {A course in stochastic game theory},
  volume    = {103},
  year      = {2022},
  publisher = {Cambridge University Press},
  doi = {10.1017/9781009029704}
}

@book{baier2008principles,
  author    = {Christel Baier and Joost-Pieter Katoen},
  title     = {Principles of model checking},
  year      = {2008},
  publisher = {MIT press}
}

@article{CaludeJKLS22,
  author  = {Cristian S. Calude and Sanjay Jain and Bakhadyr Khoussainov and Wei Li and Frank Stephan},
  title   = {Deciding Parity Games in Quasi-polynomial Time},
  journal = {{SIAM} J. Comput.},
  volume  = {51},
  number  = {2},
  pages   = {17--152},
  year    = {2022},
  doi     = {10.1137/17M1145288}
}

@inproceedings{JurdzinskiL17,
  author    = {Marcin Jurdzi{\'{n}}ski and Ranko Lazi{\'{c}}},
  title     = {Succinct progress measures for solving parity games},
  booktitle = lics17,
  pages     = {1--9},
  year      = {2017},
  doi       = {10.1109/LICS.2017.8005092}
}

@article{FearnleyJKSSW19,
  author  = {John Fearnley and Sanjay Jain and Bart de Keijzer and Sven Schewe and Frank Stephan and Dominik Wojtczak},
  title   = {An ordered approach to solving parity games in quasi-polynomial time and quasi-linear space},
  journal = {Int. J. Softw. Tools Technol. Transf.},
  volume  = {21},
  number  = {3},
  pages   = {325--349},
  year    = {2019},
  doi     = {10.1007/s10009-019-00509-3}
}

@article{GimbertI17,
  author  = {Hugo Gimbert and Rasmus Ibsen{-}Jensen},
  title   = {A short proof of correctness of the quasi-polynomial time algorithm for parity games},
  journal = {CoRR},
  volume  = {abs/1702.01953},
  year    = {2017},
  url     = {http://arxiv.org/abs/1702.01953}
}

@inproceedings{Lehtinen2018modal,
  author       = {Karoliina Lehtinen},
  Aeditor       = {Anuj Dawar and
                  Erich Gr{\"{a}}del},
  title        = {A modal {\(\mu\)} perspective on solving parity games in quasi-polynomial
                  time},
  booktitle    = {Proceedings of the 33rd Annual {ACM/IEEE} Symposium on Logic in Computer
                  Science, {LICS} 2018, Oxford, UK, July 09-12, 2018},
  pages        = {639--648},
  publisher    = {{ACM}},
  year         = {2018},
  url          = {https://doi.org/10.1145/3209108.3209115},
  doi          = {10.1145/3209108.3209115},
  bibsource    = {dblp computer science bibliography, https://dblp.org}
}

@article{LehtinenPSW22,
  author  = {Karoliina Lehtinen and Pawe{\l} Parys and Sven Schewe and Dominik Wojtczak},
  title   = {A Recursive Approach to Solving Parity Games in Quasipolynomial Time},
  journal = {Log. Methods Comput. Sci.},
  volume  = {18},
  number  = {1},
  year    = {2022},
  doi     = {10.46298/lmcs-18(1:8)2022}
}

@inproceedings{FijalkowGO20,
  author    = {Nathana{\"{e}}l Fijalkow and Pawe{\l} Gawrychowski and Pierre Ohlmann},
  title     = {Value Iteration Using Universal Graphs and the Complexity of Mean Payoff Games},
  booktitle = {45th International Symposium on Mathematical Foundations of Computer Science, {MFCS} 2020},
  series    = {LIPIcs},
  volume    = {170},
  pages     = {34:1--34:15},
  publisher = {Schloss Dagstuhl - Leibniz-Zentrum f{\"{u}}r Informatik},
  year      = {2020},
  doi       = {10.4230/LIPIcs.MFCS.2020.34}
}

@inproceedings{chakrabarti2003resource,
  author    = {Arindam Chakrabarti and Luca de Alfaro and Thomas A. Henzinger and Mari{\"{e}}lle Stoelinga},
  title     = {Resource interfaces},
  booktitle = {International Workshop on Embedded Software},
  pages     = {117--133},
  year      = {2003},
  publisher = {Springer},
  doi       = {10.1007/978-3-540-45212-6_9}
}

@inproceedings{bouyer2008infinite,
  author    = {Patricia Bouyer and Uli Fahrenberg and Kim G. Larsen and Nicolas Markey and Ji{\v{r}}{\'{i}} Srba},
  title     = {Infinite runs in weighted timed automata with energy constraints},
  booktitle = {International Conference on Formal Modeling and Analysis of Timed Systems},
  pages     = {33--47},
  year      = {2008},
  publisher = {Springer},
  doi       = {10.1007/978-3-540-85778-5_4}
}

@article{brim2011faster,
  author  = {Lubo{\v{s}} Brim and Jakub Chaloupka and Laurent Doyen and Raffaella Gentilini and Jean-Fran{\c{c}}ois Raskin},
  title   = {Faster algorithms for mean-payoff games},
  journal = {Formal Methods in System Design},
  volume  = {38},
  number  = {2},
  pages   = {97--118},
  year    = {2011},
  doi     = {10.1007/s10703-010-0105-x}
}

@inproceedings{emerson1993model,
  author    = {E. Allen Emerson and Charanjit S. Jutla and A. Prasad Sistla},
  title     = {On model-checking for fragments of $\mu$-calculus},
  booktitle = {International Conference on Computer Aided Verification},
  pages     = {385--396},
  year      = {1993},
  publisher = {Springer},
  doi       = {10.1007/3-540-56922-7_32}
}

@article{mcnaughton1993infinite,
  author  = {Robert McNaughton},
  title   = {Infinite games played on finite graphs},
  journal = {Annals of Pure and Applied Logic},
  volume  = {65},
  number  = {2},
  pages   = {149--184},
  year    = {1993},
  doi     = {10.1016/0168-0072(93)90036-D}
}

@inproceedings{AkianCDG13,
  author    = {Marianne Akian and Jean Cochet{-}Terrasson and Sylvie Detournay and St{\'{e}}phane Gaubert},
  title     = {Solving multichain stochastic games with mean payoff by policy iteration},
  booktitle = {Proceedings of the 52nd {IEEE} Conference on Decision and Control, {CDC} 2013},
  pages     = {1834--1841},
  publisher = {{IEEE}},
  year      = {2013},
  doi       = {10.1109/CDC.2013.6760149}
}

@inproceedings{AugerMS21,
  author    = {David Auger and Xavier Badin de Montjoye and Yann Strozecki},
  title     = {A Generic Strategy Improvement Method for Simple Stochastic Games},
  booktitle = {46th International Symposium on Mathematical Foundations of Computer Science, {MFCS} 2021},
  series    = {LIPIcs},
  volume    = {202},
  pages     = {12:1--12:22},
  publisher = {Schloss Dagstuhl - Leibniz-Zentrum f{\"{u}}r Informatik},
  year      = {2021},
  doi       = {10.4230/LIPIcs.MFCS.2021.12}
}

@article{ColcombetFGO22,
  author  = {Thomas Colcombet and Nathana{\"{e}}l Fijalkow and Pawe{\l} Gawrychowski and Pierre Ohlmann},
  title   = {The Theory of Universal Graphs for Infinite Duration Games},
  journal = {Log. Methods Comput. Sci.},
  volume  = {18},
  number  = {3},
  year    = {2022},
  doi     = {10.46298/lmcs-18(3:29)2022}
}

@article{CominR17,
  author  = {Carlo Comin and Romeo Rizzi},
  title   = {Improved Pseudo-polynomial Bound for the Value Problem and Optimal Strategy Synthesis in Mean Payoff Games},
  journal = {Algorithmica},
  volume  = {77},
  number  = {4},
  pages   = {995--1021},
  year    = {2017},
  doi     = {10.1007/s00453-016-0123-1}
}

@inproceedings{CzerwinskiDFJLP19,
  author    = {Wojciech Czerwi{\'{n}}ski and Laure Daviaud and Nathana{\"{e}}l Fijalkow and Marcin Jurdzi{\'{n}}ski and Ranko Lazi{\'{c}} and Pawe{\l} Parys},
  title     = {Universal trees grow inside separating automata: Quasi-polynomial lower bounds for parity games},
  booktitle = {Proceedings of the Thirtieth Annual {ACM-SIAM} Symposium on Discrete Algorithms, {SODA} 2019},
  pages     = {2333--2349},
  publisher = {{SIAM}},
  year      = {2019},
  doi       = {10.1137/1.9781611975482.142}
}

@incollection{Everett1957,
  author    = {Hugh Everett},
  title     = {Recursive games},
  booktitle = {Contributions to the Theory of Games, Volume III},
  series    = {Annals of Mathematics Studies},
  volume    = {39},
  pages     = {67--78},
  year      = {1957},
  publisher = {Princeton University Press},
  address   = {Princeton, NJ},
}

@incollection{Gillette1957,
  author    = {Dean Gillette},
  title     = {Stochastic games with zero stop probability},
  booktitle = {Contributions to the Theory of Games, Volume III},
  series    = {Annals of Mathematics Studies},
  volume    = {39},
  pages     = {179--187},
  year      = {1957},
  publisher = {Princeton University Press},
  address   = {Princeton, NJ}
}

@article{liggett1969stochastic,
  author  = {Thomas M. Liggett and Steven A. Lippman},
  title   = {Stochastic Games with Perfect Information and Time Average Payoff},
  journal = {SIAM Review},
  volume  = {11},
  number  = {4},
  pages   = {604--607},
  year    = {1969},
  doi     = {10.1137/1011093}
}

@article{mertens1981stochastic,
  author  = {Jean-Fran{\c{c}}ois Mertens and Abraham Neyman},
  title   = {Stochastic games},
  journal = {International Journal of Game Theory},
  volume  = {10},
  number  = {2},
  pages   = {53--64},
  year    = {1981},
  doi     = {10.1007/BF01769259}
}

@article{bellman1957markovian,
  author  = {Richard E. Bellman},
  title   = {A {M}arkovian decision process},
  journal = {Journal of Mathematics and Mechanics},
  volume  = {6},
  number  = {5},
  pages   = {679--684},
  year    = {1957},
  doi     = {10.1512/iumj.1957.6.56038}
}

@book{sutton2018reinforcement,
  author    = {Richard S. Sutton and Andrew G. Barto},
  title     = {Reinforcement Learning: An Introduction},
  edition   = {2},
  year      = {2018},
  publisher = {MIT Press},
  address   = {Cambridge, MA}
}

@book{Teschl12,
  author    = {Gerald Teschl},
  title     = {Ordinary Differential Equations and Dynamical Systems},
  series    = {Graduate Studies in Mathematics},
  volume    = {140},
  publisher = {American Mathematical Society},
  address   = {Providence, RI},
  year      = {2012},
  doi       = {10.1090/gsm/140}
}

@book{Walter98,
  author    = {Wolfgang Walter},
  title     = {Ordinary Differential Equations},
  series    = {Graduate Texts in Mathematics},
  volume    = {182},
  publisher = {Springer},
  address   = {New York, NY},
  year      = {1998},
  doi       = {10.1007/978-1-4612-0601-9}
}

\end{document}